%% file: main_arxiv.tex
\ifdefined\pdfsuppressptexinfo
\fi
\documentclass[sigconf,nonacm,balance=false]{acmart}

\usepackage{amsmath}
\usepackage{array}
\usepackage{booktabs}

\newtheorem{theorem}{Theorem}
\newtheorem{lemma}[theorem]{Lemma}
\newtheorem{proposition}[theorem]{Proposition}
\newtheorem{corollary}[theorem]{Corollary}

\newcommand{\M}{\mathsf M}
\newcommand{\MP}{\mathsf M_{\mathcal P}}
\newcommand{\calP}{\mathcal P}
\newcommand{\Ind}{\mathbb I}

\title{Reachability-Certified Subteam Decomposition for Locally Interacting Multi-Agent MDPs}

\author{Xiangwu Wang}
\affiliation{%
  \institution{University of Hong Kong}
  \city{}\country{}
}

\author{Chengwei Cao}
\affiliation{%
  \institution{University of California, San Diego}
  \city{}\country{}
}

\author{Hongyuan Tang}
\affiliation{%
  \institution{Carnegie Mellon University}
  \city{}\country{}
}

\begin{document}
\hypersetup{pdfauthor={Xiangwu Wang, Chengwei Cao, and Hongyuan Tang}}

\begin{abstract}
Persistent communication limits force a multi-agent system to decide which agents
may coordinate throughout a rollout.  Current proximity alone is insufficient:
separated agents may interact later, whereas a large pair reward may remain
unreachable until it is heavily discounted.  We introduce Reachability-Certified
Subteam Decomposition (RCSD) for finite multi-agent Markov decision processes with
factorized physical dynamics, finite-range ordered pair rewards, and almost-sure
motion bounds.  RCSD combines a speed-limit lower bound on pairwise contact time
with a reward envelope to form a current-state affinity.  For any capacity-valid
persistent partition, the sum of cut affinities bounds the reward-deletion error of
every unchanged stationary Markov state-feedback policy.  A product of team-optimal policies
for the resulting cut MDP incurs at most twice this certificate in regret against
the centralized optimum.  Both bounds are worst-case tight.  On a controlled
five-agent family, RCSD-Exact reduces aggregate normalized execution regret by
56.0\%, 28.8\%, and 25.3\% relative to uniform, distance-only, and envelope-only
partitions.  A separate stochastic two-dimensional study finds no bound violation
over 384 exact-partition and 1,440 restricted-controller evaluations.  Exact
four-agent evidence favors RCSD over uniform and distance-only grouping; raw
evidence for current contact is borderline and envelope-only is unresolved.
Across balanced 8--20-agent strata, controller-library utility is mixed:
pointwise paired intervals favor RCSD over distance and current contact, include
zero for uniform, and favor envelope-only and Value-MIP over RCSD.  Partition
construction remains subsecond in median up to 100 agents;
this last result does not include affinity formation or MDP planning.
\end{abstract}

\ccsdesc[500]{Computing methodologies~Multi-agent systems}
\ccsdesc[300]{Computing methodologies~Planning and scheduling}

\keywords{multi-agent systems, subteam coordination, communication constraints,
factored MDPs, reachability certificates}

\maketitle

\section{Introduction}

Multi-agent planning often exploits sparse transition and reward structure to avoid
a single monolithic controller~\cite{guestrin2001factored,scharpff2016transition}.
Cooperative learning methods likewise factor values or learn communication
protocols~\cite{foerster2016dial,sukhbaatar2016commnet,rashid2018qmix}.  These
approaches do not, by themselves, answer a deployment decision that precedes
planning: under a hard limit on persistent team size, which agents should retain a
shared state channel?

Neither current proximity nor a static reward graph resolves this decision.  Agents
that are separated now may enter interaction range later, and connected components
of local contacts can grow through moving chains.  Conversely, an interaction that
cannot occur for many steps contributes only a discounted tail to present value.
LIMDPs formalize dynamic spatial dependencies and state-dependent communication
groups~\cite{deweese2024limdp}.  Here the resource model is different: a dispatcher
observes the initial state once, forms teams of at most $L$ agents, and maintains
full state sharing within each team but no cross-team channel for the entire
rollout.  The resulting persistent overlay fixes the information scope of every
team planner.

This setting raises a concrete question: \emph{before solving the task, can the
value at risk from a capacity-constrained partition be bounded from the current
state?}  When motion is bounded and pair rewards have finite range, initial
separation yields a policy-independent lower bound on the first possible contact.
Discounting a valid pair-reward envelope from that time gives an affinity whose cut
sum bounds the consequence of ignoring cross-team rewards during planning.
RCSD-Exact minimizes this sum to choose persistent subteams
(Figure~\ref{fig:method}).

\begin{figure*}[t]
  \centering
  \includegraphics[width=\textwidth]{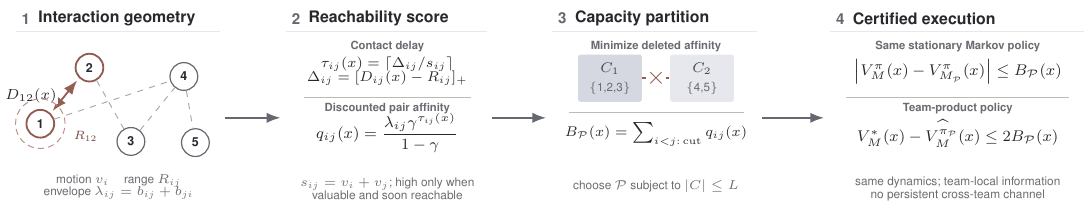}
  \caption{\textbf{Reachability-Certified Subteam Decomposition.}  Current
  geometry, almost-sure speed limits, and the aggregate directed reward envelope
  $\lambda_{ij}=b_{ij}+b_{ji}$ give a contact-time lower bound and discounted
  affinity.  RCSD chooses a capacity-valid persistent partition, deletes cross-team
  rewards only in the planning surrogate, and executes the resulting team policies
  in the original MDP.  The fixed-policy inequality compares the same Markov policy
  in both MDPs; the regret inequality applies to a team-product cut-MDP optimum.}
  \Description{A left-to-right diagram showing spatial agents, pairwise contact-time
  certificates, a capacity-constrained team partition, separate team planners, and
  certified fixed-policy and optimal-policy value bounds.}
  \label{fig:method}
\end{figure*}

RCSD supplies both a pre-planning certificate and an executable decomposition.
It turns LIMDP-style dependence-time reasoning into heterogeneous,
initial-state edge costs derived from distances, speed bounds, interaction
ranges, and both directed reward envelopes.  For any capacity-valid partition,
the cut sum $B_{\calP}(x)$ bounds the value change of an unchanged stationary
Markov policy after cross-team reward deletion.  Under product dynamics and
within-team rewards, the cut MDP separates into team problems, and executing a
team-product cut-MDP optimum in the original MDP incurs at most
$2B_{\calP}(x)$ regret against the centralized optimum.  Both constants are
worst-case tight for signed rewards.

Minimizing the certificate yields RCSD-Exact, while scalable feasible solvers
return partitions with certificates of their own.  We verify the theory on
finite MDPs, compare partition selectors under exact team planning on a
controlled five-agent problem, and evaluate partition construction up to 100
agents.  An omitted-feature control tests whether the advantage persists when the
affinity lacks the feature that determines useful coordination.  Random
two-dimensional obstacle grids then separate exact unrestricted four-agent regret
from controller-library regret at 8--20 agents, testing richer dynamics without
conflating the two estimands.

\section{Related Work}

Factored and transition-independent decentralized MDPs provide a natural context
for local multi-agent planning: they represent local dynamics and structured joint
rewards~\cite{guestrin2001factored,becker2004transition}, while
transition-independent multi-agent MDP solvers
exploit sparse reward interactions without discarding them
\cite{scharpff2016transition}.  Payoff propagation similarly uses sparse
coordination graphs~\cite{kok2006payoff}.  Localized networked-control analyses
show discount-decaying influence with graph distance~\cite{qu2020localized}.
Most closely, LIMDPs combine factorized dynamics, finite-range ordered pair rewards,
bounded movement, and dynamically changing communication groups
\cite{deweese2024limdp}.  Their Dependence-Time Lemma turns bounded motion and a
visibility--reward-range gap into a uniform zero-reward prefix, and their Cutoff
analysis converts that prefix into a discounted value bound.  Unlike LIMDP's
visibility-dependent execution groups and monotone-refining auxiliary cutoff
partition, RCSD selects an arbitrary capacity-valid overlay from heterogeneous
edge certificates and holds it fixed throughout execution.  Its contribution is
not a new coalition objective,
but the combination of an initial-state heterogeneous per-edge certificate, a
fixed-policy deletion bound for every capacity-valid persistent partition, and a
$2B_{\calP}$ execution-regret bound for the resulting product-team policy.

Learned coordination offers a complementary route.  Value
factorization~\cite{sunehag2018vdn,rashid2018qmix}, centralized-training
actor--critic methods~\cite{lowe2017maddpg}, and differentiable communication
\cite{foerster2016dial,sukhbaatar2016commnet} address cooperative learning.
Targeted communication learns whom to message online~\cite{das2019tarmac}, but
does not impose a rollout-persistent block-size cap or return a pre-planning
worst-case value certificate.  VAST learns variable
subteams for value factorization~\cite{phan2021vast}; SOG self-organizes groups
\cite{shao2022sog}; QSCAN represents subteam coordination within value factorization
\cite{huang2022qscan}; and GoMARL and HYGMA learn dynamic grouping structures
\cite{zang2023gomarl,liu2025hygma}.  Correlated Policy Optimization dynamically
allocates a DAG edge budget using dependency scores and analyzes policy optimization
under decomposability conditions~\cite{chen2026cpo}; RCSD instead enforces block
cardinality and certifies every feasible fixed partition before policy optimization.
STAF uses spatial graph cuts for multi-robot formations~\cite{deng2025staf}.  In
contrast to these learned or dynamically structured coordination methods, RCSD supplies
an a priori, planner-independent sufficient loss certificate under a stated
physical model.

From an optimization perspective, capacity-bounded nonnegative edge-weight
coalition formation is an established
optimization problem with hardness and approximation results
\cite{levinger2024bounded}; broader coalition-structure generation also has exact
and anytime algorithms~\cite{rahwan2009anytime}.  Our induced partition objective
is an instance of this problem.  Most directly, Fiscko et al. cluster transition-
independent MDPs around shared controls and clustered value iteration, and
separately optimize reachable state-space size in factored MDPs under a cluster
budget~\cite{fiscko2025clustered,fiscko2023reachability}.  RCSD instantiates this
established objective with heterogeneous initial-state speed-limit delay/envelope
costs and uses their cut sum to bound discounted distortion from deleting
finite-range cross-team rewards.
Standard simulation arguments relate model
perturbations to policy value and transfer optimal policies with a two-sided loss
\cite{kearns2002simulation}.  The distinction here is the reachability-derived
per-edge aggregation used as the persistent-overlay capacity objective.

\section{Model and Capacity-Limited Overlay}
\label{sec:model}

Let $\mathcal I=[n]$ index agents and
$\M=(\mathcal X,\mathcal A,K,r,\gamma)$ be a finite discounted MDP, with
$\mathcal X=\prod_i\mathcal X_i$, $\mathcal A=\prod_i\mathcal A_i$, and
$0<\gamma<1$.  Physical transitions are agent-factorized,
\begin{equation}
K(x'\mid x,a)=\prod_{i=1}^n K_i(x_i'\mid x_i,a_i).
\label{eq:product-kernel}
\end{equation}
Each local state has a position $p_i(x_i)$ in a common metric space
$(\mathcal Z,d)$.  Every positive-probability local transition has displacement at
most $v_i$ under $d$; this is an almost-sure, not expected, motion bound.

Rewards are evaluated from the pre-transition state and action and decompose into
local and ordered pair terms,
\begin{equation}
r(x,a)=\sum_i r_i(x_i,a_i)+
\sum_{i\ne j}r_{ij}(x_i,x_j,a_i,a_j).
\label{eq:reward-factorization}
\end{equation}
For symmetric interaction range $R_{ij}=R_{ji}$, $r_{ij}=0$ whenever
$D_{ij}(x):=d(p_i(x_i),p_j(x_j))>R_{ij}$, and $|r_{ij}|\le b_{ij}$.
The speed, range, and reward-envelope quantities are known valid upper bounds;
underestimation would void the certificate.

At initial state $x$, a dispatcher announces a partition $\calP$ once.  Every
block $C\in\calP$ obeys $|C|\le L$, and the partition remains fixed.  Members of
$C$ persistently share $x_C$ and select
$a_C$ jointly; no cross-team information channel exists.  The cut MDP $\MP$ keeps
the physical kernel and deletes precisely the pair rewards crossing teams:
\begin{equation}
r_{\calP}(x,a)=\sum_i r_i(x_i,a_i)+
\sum_{i\ne j:\,\calP(i)=\calP(j)}r_{ij}(x_i,x_j,a_i,a_j).
\label{eq:cut-reward}
\end{equation}
The cut reward is a planning surrogate.  The resulting team policies are always evaluated
in the original reward model $\M$.

\section{Reachability Certificate}
\label{sec:certificate}

Write $s_{ij}=v_i+v_j$ and $[z]_+=\max\{z,0\}$.  The speed-limit contact-time
lower bound is
\begin{equation}
\tau_{ij}(x)=
\begin{cases}
\left\lceil [D_{ij}(x)-R_{ij}]_+/s_{ij}\right\rceil,&s_{ij}>0,\\
0,&s_{ij}=0,~D_{ij}(x)\le R_{ij},\\
\infty,&s_{ij}=0,~D_{ij}(x)>R_{ij}.
\end{cases}
\label{eq:tau}
\end{equation}
Obstacles, clipping, or the policy can delay contact, so $\tau_{ij}$ need not be
the exact achievable time.  It is sufficient that contact is impossible earlier.

Aggregate both directed envelopes as $\lambda_{ij}=b_{ij}+b_{ji}$ and define, with
$\gamma^\infty=0$,
\begin{equation}
q_{ij}(x)=\frac{\lambda_{ij}\gamma^{\tau_{ij}(x)}}{1-\gamma},\qquad
B_{\calP}(x)=\sum_{\substack{i<j:\\ \calP(i)\ne\calP(j)}}q_{ij}(x).
\label{eq:certificate}
\end{equation}
The affinity $q_{ij}$ charges the full pair envelope at every time from the
certified lower-bound index onward.  It is neither a contact probability nor an
action value.

\begin{lemma}[Speed-limit contact]
\label{lem:contact}
Along every feasible trajectory starting at $x$, both ordered rewards of pair
$(i,j)$ vanish at every pre-transition return index $t<\tau_{ij}(x)$.
\end{lemma}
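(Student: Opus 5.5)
The plan is to bound the pairwise distance along any feasible trajectory and combine it with the finite-range condition on $r_{ij}$ and $r_{ji}$. Fix a feasible trajectory $x^0=x,x^1,x^2,\dots$. Here feasible means that each transition $x^t\to x^{t+1}$ has positive probability under $K(\cdot\mid x^t,a^t)$ for some action $a^t$. By the product form \eqref{eq:product-kernel}, every local transition $x_i^t\to x_i^{t+1}$ then has positive probability under $K_i(\cdot\mid x_i^t,a_i^t)$. The almost-sure motion bound therefore gives $d(p_i(x_i^{t+1}),p_i(x_i^t))\le v_i$, and likewise for $j$. This is the one place where factorization is used: a positive-probability joint transition forces positive-probability marginals, because a product of nonnegative factors is positive only if every factor is positive.

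\textbf{Distance bound.} Applying the triangle inequality twice per step and inducting on $t$ gives
\begin{equation*}
D_{ij}(x^t)\ \ge\ D_{ij}(x)-t\,s_{ij}\qquad\text{for all } t\ge 0 .
\end{equation*}
Concretely, $d(p_i^t,p_j^t)\ge d(p_i^0,p_j^0)-d(p_i^0,p_i^t)-d(p_j^0,p_j^t)$. Each displacement is bounded by $t v_i$ and $t v_j$ respectively, by telescoping the per-step bounds.

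\textbf{Case analysis on \eqref{eq:tau}.} The rewards at return index $t$ are evaluated at the pre-transition state $x^t$, so it suffices to show $D_{ij}(x^t)>R_{ij}$ for every $t<\tau_{ij}(x)$. The symmetry $R_{ij}=R_{ji}$ then kills both $r_{ij}$ and $r_{ji}$.
\begin{itemize}
\item If $s_{ij}=0$ and $D_{ij}(x)\le R_{ij}$, then $\tau_{ij}(x)=0$ and there is nothing to prove.
\item If $s_{ij}=0$ and $D_{ij}(x)>R_{ij}$, the distance bound gives $D_{ij}(x^t)\ge D_{ij}(x)>R_{ij}$ for all $t$, which matches $\tau_{ij}=\infty$.
\item If $s_{ij}>0$, take an integer $t<\lceil [D_{ij}(x)-R_{ij}]_+/s_{ij}\rceil$. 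Then $\tau_{ij}(x)\ge 1$, so $D_{ij}(x)>R_{ij}$ and the positive part is inactive. Since $t$ is an integer strictly below the ceiling of $y:=(D_{ij}(x)-R_{ij})/s_{ij}$, we have $t<y$, that is, $t\,s_{ij}<D_{ij}(x)-R_{ij}$. Hence $D_{ij}(x^t)\ge D_{ij}(x)-t s_{ij}>R_{ij}$.
\end{itemize}

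The main point needing care is this ceiling step. We need a strict inequality $D_{ij}(x^t)>R_{ij}$, because $r_{ij}$ may be nonzero at distance exactly $R_{ij}$. The integer fact $t<\lceil y\rceil\Rightarrow t<y$ supplies that strictness. Everything else is routine bookkeeping, apart from stating clearly that the index $t$ refers to pre-transition states, which is the paper's reward convention.
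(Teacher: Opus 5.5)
Your proof is correct and follows the paper's own argument: the pathwise bound $D_{ij}(X_t)\ge D_{ij}(x)-t\,s_{ij}$ from the triangle inequality and almost-sure step bounds, then the case split on Eq.~\eqref{eq:tau}, where $t<\lceil y\rceil\Rightarrow t<y$ gives the needed strict inequality, and finally finite-range support at the pre-transition index, which makes both $r_{ij}$ and $r_{ji}$ vanish. Your remark that a positive-probability joint transition has positive-probability local factors under Eq.~\eqref{eq:product-kernel} is a valid detail that the paper leaves implicit.
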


\begin{proof}
Repeated reverse triangle inequality and the almost-sure step bounds give
$D_{ij}(X_t)\ge D_{ij}(x)-t(v_i+v_j)$.  For every integer
$t<\lceil[D_{ij}(x)-R_{ij}]_+/(v_i+v_j)\rceil$, the right-hand side is strictly
larger than $R_{ij}$ whenever the pair starts outside range.  The zero-speed cases
follow directly.  Finite-range support then makes both ordered rewards zero.
\end{proof}

\begin{theorem}[Policy-wise deletion certificate]
\label{thm:policy}
Let $\pi(a\mid x)$ be any stationary randomized Markov state-feedback policy used
unchanged in $\M$ and $\MP$.  For every initial state $x$,
\begin{equation}
\left|V_{\M}^{\pi}(x)-V_{\MP}^{\pi}(x)\right|\le B_{\calP}(x).
\label{eq:policy-bound}
\end{equation}
\end{theorem}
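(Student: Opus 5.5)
Since $\pi$ is stationary Markov and the physical kernel $K$ is the same in $\M$ and $\MP$, both MDPs induce the same law $\mathbb P^\pi_x$ on trajectories $(X_t,A_t)_{t\ge0}$. The plan is therefore to write both values as expectations of discounted sums under this single measure, subtract them pathwise, and bound the difference edge by edge using Lemma~\ref{lem:contact}. Concretely, because the rewards are finite (finite MDP) and $0<\gamma<1$, dominated convergence lets us write
\begin{equation*}
V_{\M}^{\pi}(x)-V_{\MP}^{\pi}(x)=\mathbb E^\pi_x\Bigl[\sum_{t\ge0}\gamma^t\bigl(r(X_t,A_t)-r_{\calP}(X_t,A_t)\bigr)\Bigr],
\end{equation*}
and by \eqref{eq:reward-factorization} and \eqref{eq:cut-reward} the integrand is exactly the sum of the cross-team ordered pair rewards,
\begin{equation*}
r-r_{\calP}=\sum_{i<j:\,\calP(i)\ne\calP(j)}\bigl(r_{ij}+r_{ji}\bigr)(X_{t,i},X_{t,j},A_{t,i},A_{t,j}),
\end{equation*}
where we group the two orientations of each unordered cross pair.

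Next, apply the triangle inequality to move the absolute value inside the expectation and the sums, and exchange them by Tonelli, since the terms are bounded and geometrically weighted. This reduces the claim to showing, for each cut pair $i<j$ and every feasible trajectory,
\begin{equation*}
\sum_{t\ge0}\gamma^t\,\bigl|r_{ij}+r_{ji}\bigr|(X_t,A_t)\le q_{ij}(x).
\end{equation*}
Feasible trajectories are those of positive probability, and these are the only ones that matter almost surely. By Lemma~\ref{lem:contact}, both ordered terms vanish for $t<\tau_{ij}(x)$. For $t\ge\tau_{ij}(x)$, the envelopes give $|r_{ij}+r_{ji}|\le b_{ij}+b_{ji}=\lambda_{ij}$. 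Summing the geometric tail yields $\lambda_{ij}\gamma^{\tau_{ij}(x)}/(1-\gamma)$. When $\tau_{ij}=\infty$, the sum is identically zero, which matches the convention $\gamma^\infty=0$.

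Summing over the cut pairs gives $B_{\calP}(x)$. The main point needing care is the \emph{almost-sure} application of Lemma~\ref{lem:contact}: the lemma is pathwise over feasible trajectories, so we must note that $\mathbb P^\pi_x$ is supported on trajectories whose every local transition has positive probability under $K_i$. This holds because \eqref{eq:product-kernel} makes each joint step's support a product of the local supports. The remaining steps are routine bookkeeping, namely interchanging expectation and series and handling the $\tau=\infty$ convention.
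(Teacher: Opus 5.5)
Your proposal is correct and follows essentially the same argument as the paper's expanded proof. Both couple the two models through their common state--action law and write the value gap as the expected discounted sum of the deleted cross-pair rewards $r-r_{\calP}$; both then move the absolute value inside and bound each cut pair by zero before $\tau_{ij}(x)$ (Lemma~\ref{lem:contact}) and by $\lambda_{ij}$ afterward, summing the geometric tail to obtain $B_{\calP}(x)$. Your remark that the trajectory law is supported on paths whose local transitions all have positive probability explicitly justifies the almost-sure use of the lemma, which the paper leaves implicit.
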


Here $\pi$ may depend on the full joint state: ``unchanged'' means that the same
action kernel $\pi(\cdot\mid x)$, rather than a policy reoptimized after reward
deletion, is evaluated in both MDPs.

\begin{proof}
The two MDPs have identical policy and transition kernels, hence the same
state--action law.  Their only return difference is the deleted reward.  Lemma
\ref{lem:contact} makes each cut pair's absolute contribution zero before
$\tau_{ij}$ and the two directed envelopes bound it by $\lambda_{ij}$ thereafter.
Triangle inequality and the discounted tail
$\sum_{t=\tau_{ij}}^\infty\gamma^t$ give Eq.~\eqref{eq:policy-bound}.
\end{proof}

The unchanged-policy condition is substantive: a controller that observes deleted
rewards and reacts through reward history need not induce the same future actions.
The appendix gives an executable counterexample.  Reward-blind history-dependent
policies held fixed across models also admit the coupling, but we state the Markov
class used by the optimization result.

\begin{proposition}[Cut-MDP factorization]
\label{prop:factorization}
Under Eqs.~\eqref{eq:product-kernel}--\eqref{eq:cut-reward}, the centralized
optimal value of $\MP$ is attained by a product of team-local stationary Markov
policies $\widehat\pi_{\calP}(a\mid x)=\prod_{C\in\calP}
\widehat\pi_C(a_C\mid x_C)$.
\end{proposition}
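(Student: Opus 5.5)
Since $\MP$ is a finite discounted MDP, Bellman's principle gives a deterministic stationary Markov optimal policy. My plan is to show that the optimal value function of $\MP$ is a sum of team value functions. Each team value comes from a smaller team MDP, and a product of the team-optimal policies attains that sum.

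First, define for each block $C\in\calP$ the team MDP $\M_C$. Its state space is $\prod_{i\in C}\mathcal X_i$ and its action space is $\prod_{i\in C}\mathcal A_i$. Its kernel is $K_C(x_C'\mid x_C,a_C)=\prod_{i\in C}K_i(x_i'\mid x_i,a_i)$. Its reward is
\[
r_C(x_C,a_C)=\sum_{i\in C}r_i(x_i,a_i)+\sum_{i\ne j\in C}r_{ij}(x_i,x_j,a_i,a_j),
\]
and its discount is $\gamma$. By Eq.~\eqref{eq:cut-reward}, every surviving pair term has both indices in the same block. Hence $r_{\calP}(x,a)=\sum_{C\in\calP}r_C(x_C,a_C)$ exactly. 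By Eq.~\eqref{eq:product-kernel}, $K(x'\mid x,a)=\prod_{C}K_C(x_C'\mid x_C,a_C)$, obtained by regrouping the agent factors by block. Let $V_C^\star$ be the optimal value of $\M_C$, and let $\widehat\pi_C$ be a deterministic stationary optimal policy for $\M_C$.

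Second, I verify that $W(x):=\sum_C V_C^\star(x_C)$ satisfies the Bellman optimality equation of $\MP$. This is the main step. For any joint action $a$,
\[
r_{\calP}(x,a)+\gamma\sum_{x'}K(x'\mid x,a)W(x')=\sum_C\Big[r_C(x_C,a_C)+\gamma\sum_{x_C'}K_C(x_C'\mid x_C,a_C)V_C^\star(x_C')\Big].
\]
To get this, I expand $W(x')$ as a sum over blocks. For the term belonging to block $C$, the factors $K_{C'}$ with $C'\ne C$ each sum to one and drop out. The $C$-th bracket depends only on $a_C$, so the maximum over $a=(a_C)_C$ of the sum equals the sum of the blockwise maxima. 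Each blockwise maximum is $V_C^\star(x_C)$ by the Bellman equation of $\M_C$. So $W=\mathcal T_{\calP}W$, where $\mathcal T_{\calP}$ is the Bellman optimality operator of $\MP$. Because $\mathcal T_{\calP}$ is a $\gamma$-contraction on the finite space $\mathcal X$, its fixed point is unique, and therefore $W=V^\star_{\MP}$.

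Third, I show the product policy $\widehat\pi_{\calP}(a\mid x)=\prod_C\widehat\pi_C(a_C\mid x_C)$ attains $W$. I apply the same separation identity to the policy-evaluation operator $\mathcal T^{\widehat\pi_{\calP}}$. Since $\widehat\pi_{\calP}$ picks each $a_C$ from $\widehat\pi_C(\cdot\mid x_C)$ independently, the expectation of the bracket sum splits blockwise. Each block term equals $V_C^{\widehat\pi_C}(x_C)=V_C^\star(x_C)$. Thus $W$ is a fixed point of the $\gamma$-contraction $\mathcal T^{\widehat\pi_{\calP}}$, so $V^{\widehat\pi_{\calP}}_{\MP}=W=V^\star_{\MP}$. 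This also holds for randomized team-optimal policies.

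The only delicate point is the exact decomposition in the second step. Cross-team pair rewards must have been deleted entirely, since any surviving term $r_{ij}$ with $\calP(i)\ne\calP(j)$ would couple $a_C$ across blocks and break the separable maximization. The transitions must factor with no coupling across blocks. Both conditions are supplied precisely by Eqs.~\eqref{eq:product-kernel} and~\eqref{eq:cut-reward}. The capacity constraint $|C|\le L$ plays no role here; it only ensures each $\M_C$ is small.
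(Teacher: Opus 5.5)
Your proposal is correct and follows essentially the same route as the paper: you form $r_C$ and $K_C$, show that $\sum_C V_C^\star(x_C)$ is a fixed point of the cut-MDP Bellman optimality operator because the maximum separates over the Cartesian action blocks, use the contraction property for uniqueness, and take the product of team-optimal policies. Your third-step check through the policy-evaluation operator just spells out what the paper states in one line, namely that selecting a maximizing action separately in each team attains this value.
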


\begin{proof}
Collect local and within-team rewards into $r_C$ and write
$K_C=\prod_{i\in C}K_i$.  Then $r_{\calP}=\sum_Cr_C$ and $K=\prod_CK_C$.
The Bellman maximum separates over the Cartesian action blocks $a_C$, so the sum
of team-optimal value functions is the unique cut-MDP optimal value.  Team argmax
policies form a product optimum; discounted finite-MDP optimality is standard
\cite{puterman1994mdp}.
\end{proof}

\begin{corollary}[Execution regret]
\label{cor:regret}
Execute a product-optimal cut policy $\widehat\pi_{\calP}$ in the original MDP.
Then
\begin{equation}
V_{\M}^{*}(x)-V_{\M}^{\widehat\pi_{\calP}}(x)
\le 2B_{\calP}(x).
\label{eq:regret-bound}
\end{equation}
The constants in Eqs.~\eqref{eq:policy-bound} and \eqref{eq:regret-bound} are
worst-case tight when signed rewards are allowed; the $2B_{\calP}$ witness is
worst-case over the cut-MDP-optimal policy returned when cut rewards tie.
\end{corollary}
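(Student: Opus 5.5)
The regret bound follows from a three-term telescoping decomposition through the cut MDP. Let $\pi^*$ be a stationary deterministic Markov optimal policy of $\M$; one exists because $\M$ is a finite discounted MDP \cite{puterman1994mdp}. We write
\begin{equation*}
V_{\M}^{*}(x)-V_{\M}^{\widehat\pi_{\calP}}(x)
=\bigl[V_{\M}^{\pi^*}(x)-V_{\MP}^{\pi^*}(x)\bigr]
+\bigl[V_{\MP}^{\pi^*}(x)-V_{\MP}^{\widehat\pi_{\calP}}(x)\bigr]
+\bigl[V_{\MP}^{\widehat\pi_{\calP}}(x)-V_{\M}^{\widehat\pi_{\calP}}(x)\bigr].
\end{equation*}
The first and third brackets each compare one stationary Markov policy across the two MDPs, so Theorem~\ref{thm:policy} bounds each by $B_{\calP}(x)$. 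This requires $\pi^*$ and $\widehat\pi_{\calP}$ to be state-feedback Markov policies. For $\widehat\pi_{\calP}$ this holds because a product of team-local kernels is a joint Markov kernel. The middle bracket is $\le 0$ because, by Proposition~\ref{prop:factorization}, $\widehat\pi_{\calP}$ attains the optimal value of $\MP$ at every state, in particular at $x$. Summing the three terms gives Eq.~\eqref{eq:regret-bound}. This step is routine.

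For tightness of Eq.~\eqref{eq:policy-bound}, I would construct an instance with $n=2$ agents in separate blocks ($L=1$). Take speeds $v_1=v_2=0$ and $D_{12}(x)\le R_{12}$, so $\tau_{12}=0$. Use a single state with one action, and set $r_{12}=b_{12}$ and $r_{21}=b_{21}$. Then $V_{\M}-V_{\MP}=(b_{12}+b_{21})/(1-\gamma)=B_{\calP}(x)$. To get tightness at a general $\tau$, I would let the agents approach at full speed along a line so that contact first occurs exactly at step $\tau$ and persists afterwards. This shows the discounting from $\tau$ is also attained.

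The main obstacle is the regret witness achieving exactly $2B_{\calP}$. Here I would use a single joint state where agent~1 has two actions $a,b$ with identical local and within-team rewards, so the cut MDP ties and may return either. Choose signed pair rewards with $r_{12}+r_{21}=+\lambda$ under action $a$ and $-\lambda$ under action $b$, with $|r_{ij}|\le b_{ij}$; this is possible since signed values may hit each envelope with either sign. Keep the agents in contact forever with $\tau=0$. Then $V^*_{\M}=\lambda/(1-\gamma)$, while the tie-broken cut policy choosing $b$ yields $-\lambda/(1-\gamma)$, so the regret is $2\lambda/(1-\gamma)=2B_{\calP}(x)$. The delicate points are that the tie must be genuine in $\MP$, that the rewards must respect the finite-range and motion constraints, and that the statement is worst case over the returned optimizer. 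This is exactly the caveat stated in the corollary, so the witness should make the tie explicit.
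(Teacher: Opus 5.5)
Your proposal is correct and matches the paper's proof. It uses the same insertion of the cut-MDP values of $\pi^*$ and $\widehat\pi_{\calP}$: the middle term is nonpositive by cut optimality and both outer terms are bounded by Theorem~\ref{thm:policy}. It also uses the same one-state, two-singleton-team witnesses for $B$ and $2B$. Giving only agent~1 the tied second action is an inessential simplification of the paper's $(1,1)$ versus $(0,0)$ tie.

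Your general-$\tau$ tightness remark is not needed for worst-case tightness. If you keep it, choose $D_{12}(x)-R_{12}$ divisible by $s_{12}$; otherwise a full-speed approach can jump past the range at step $\tau_{12}$, and contact never starts at that step.
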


\begin{proof}
Insert the values of a full-MDP optimum $\pi^*$ and $\widehat\pi_{\calP}$ in the
cut MDP.  The middle difference is nonpositive by cut optimality; Theorem
\ref{thm:policy} bounds each of the two outer differences by $B_{\calP}(x)$.
This is the standard two-sided reward-perturbation transfer
\cite{kearns2002simulation}.  Appendix~\ref{app:proofs} gives the expanded proof
and one-state witnesses attaining $B$ and $2B$.
\end{proof}

The constants are tight.  With two singleton teams in a one-state MDP, a unique
joint action and a constant
deleted reward $b$ give value difference $b/(1-\gamma)=B_{\calP}$.  For the factor
two, let the cut reward be zero for two product actions while their deleted rewards
are $+b$ and $-b$.  Both actions are cut-optimal; if the team solver selects the
negative-reward tie and the original optimum selects the positive one, execution
regret is $2b/(1-\gamma)=2B_{\calP}$.  Thus neither constant can be reduced for
signed rewards without an additional structural or tie-breaking condition.

\begin{corollary}[Approximate team planning]
\label{cor:approximate}
Let $\widetilde\pi_{\calP}(a\mid x)=\prod_{C\in\calP}
\widetilde\pi_C(a_C\mid x_C)$ be a stationary Markov team-product policy used
unchanged in $\M$ and $\MP$, whose aggregate cut-MDP planning error at $x$ is at
most $\bar\epsilon_{\calP}(x)$:
\[
V_{\MP}^{*}(x)-V_{\MP}^{\widetilde\pi_{\calP}}(x)
\le \bar\epsilon_{\calP}(x).
\]
Then its execution regret in the original MDP satisfies
\[
V_{\M}^{*}(x)-V_{\M}^{\widetilde\pi_{\calP}}(x)
\le 2B_{\calP}(x)+\bar\epsilon_{\calP}(x).
\]
\end{corollary}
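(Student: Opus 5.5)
The plan is to reuse the three-term telescoping from the proof of Corollary~\ref{cor:regret}, with the cut-optimality step replaced by the assumed planning error. Let $\pi^*$ be a stationary Markov optimum of the original MDP $\M$; one exists because $\M$ is a finite discounted MDP. We write
\[
V_{\M}^{*}(x)-V_{\M}^{\widetilde\pi_{\calP}}(x)
=\bigl[V_{\M}^{\pi^*}(x)-V_{\MP}^{\pi^*}(x)\bigr]
+\bigl[V_{\MP}^{\pi^*}(x)-V_{\MP}^{\widetilde\pi_{\calP}}(x)\bigr]
+\bigl[V_{\MP}^{\widetilde\pi_{\calP}}(x)-V_{\M}^{\widetilde\pi_{\calP}}(x)\bigr].
\]
We then bound each bracket separately.

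\textbf{First bracket.} Theorem~\ref{thm:policy} applies directly to $\pi^*$. It is a stationary randomized Markov state-feedback policy, possibly depending on the full joint state, and the theorem allows that. Here it is evaluated unchanged in both models, so the bracket is at most $B_{\calP}(x)$.

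\textbf{Third bracket.} The hypothesis states that $\widetilde\pi_{\calP}$ is stationary Markov and used unchanged in $\M$ and $\MP$. Its product form $\prod_C\widetilde\pi_C(a_C\mid x_C)$ is in particular a joint-state Markov kernel. Theorem~\ref{thm:policy} therefore bounds this bracket by $B_{\calP}(x)$ as well, using the absolute-value form of Eq.~\eqref{eq:policy-bound}.

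\textbf{Middle bracket.} We use $V_{\MP}^{\pi^*}(x)\le V_{\MP}^{*}(x)$, since $V_{\MP}^*$ is the supremum over all policies of the cut MDP. Then the planning-error hypothesis gives $V_{\MP}^{*}(x)-V_{\MP}^{\widetilde\pi_{\calP}}(x)\le\bar\epsilon_{\calP}(x)$. Adding the three bounds yields $2B_{\calP}(x)+\bar\epsilon_{\calP}(x)$.

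The main point to check is the middle step: $V_{\MP}^*$ must be interpreted as the centralized optimum over all policies, not merely over team-product policies. Otherwise the comparison with the joint-state policy $\pi^*$ would not follow. Proposition~\ref{prop:factorization} removes this concern, because the centralized optimum of $\MP$ is attained by a team product, so the two notions of optimum coincide. Hence $\bar\epsilon_{\calP}$ may be read either way. With $\bar\epsilon_{\calP}=0$ the argument reduces to Corollary~\ref{cor:regret}, which is a consistency check.
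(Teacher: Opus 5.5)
Your proposal is correct and is essentially the paper's argument. The paper inserts $V_{\MP}^{*}$ and $V_{\MP}^{\widetilde\pi_{\calP}}$ and bounds $V_{\M}^{*}-V_{\MP}^{*}\le V_{\M}^{\pi^*}-V_{\MP}^{\pi^*}\le B_{\calP}(x)$; this is the same use of $V_{\MP}^{\pi^*}\le V_{\MP}^{*}$ that you place in the middle bracket, and the planning-error step and the second application of Theorem~\ref{thm:policy} to $\widetilde\pi_{\calP}$ are identical.
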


\begin{proof}
Let $\pi^*$ be optimal in $\M$.  Then
$V_{\M}^*-V_{\MP}^*\le
V_{\M}^{\pi^*}-V_{\MP}^{\pi^*}\le B_{\calP}$ by
Theorem~\ref{thm:policy}.  Insert $V_{\MP}^{*}$ and
$V_{\MP}^{\widetilde\pi_{\calP}}$ between the original-MDP values; the middle
term is at most $\bar\epsilon_{\calP}$ by assumption, and the final reward-model
change is at most $B_{\calP}$ by the same theorem.
\end{proof}

\subsection{Certificate Properties}

The construction separates three quantities that are easily conflated in a static
interaction graph.  The envelope $\lambda_{ij}$ bounds how much the pair can
contribute once active; $\tau_{ij}(x)$ certifies how long it must remain inactive;
and $q_{ij}(x)$ combines them at the current discount.  For fixed envelope and
discount,
\begin{equation}
q_{ij}(\tau+1)=\gamma q_{ij}(\tau),
\label{eq:one-step-delay}
\end{equation}
so one additional certified zero-reward step reduces that pair's charge by exactly
$\gamma$.  The affinity is linear in the valid reward envelope, nonincreasing in
initial separation through the integer-valued contact delay, and nondecreasing when
a larger speed or interaction range permits earlier contact.  These are algebraic
properties of the sufficient bound, not fitted empirical relationships.

Additivity has two consequences.  First, Theorem~\ref{thm:policy} holds for every
capacity-valid partition, not only the partition selected by RCSD.  The solver can
therefore return an explicit certificate even if stopped early or replaced by a
different feasible coalition routine.  Second, exact minimization of
$B_{\calP}(x)$ selects the smallest theorem-derived upper bound within the stated
partition class.  It does \emph{not} necessarily select the partition with minimum
realized regret: deleted rewards can cancel, remain unreachable after the certified
prefix, or be irrelevant to the optimal action.  The hindsight regret oracle in the experiment
measures this distinction but is never available to RCSD.

Several edge cases clarify interpretation.  A zero-speed pair initially outside
range has $\tau_{ij}=\infty$ and zero certified affinity, so separating it cannot
change value through that pair under the model.  A pair already in range has
$\tau_{ij}=0$ and receives the undiscounted infinite-horizon envelope
$\lambda_{ij}/(1-\gamma)$, representing the worst-case repeated tail rather than
a prediction that the reward will recur.  Conservative overestimates of speed,
range, or reward magnitude
preserve validity while weakening selection resolution.  Underestimates can break
the zero-prefix or tail bound and therefore receive no guarantee.

The certificate is initial-state conditional.  Recomputing it later creates a new
optimization problem because a new partition changes the information structure,
cut reward, and policy being evaluated.  The fixed-overlay setting keeps setup
communication distinct from persistent per-decision links.

These results form a hierarchy of guarantees.  The policy-wise certificate uses
finite-range pre-transition rewards,
almost-sure motion bounds, both directed reward envelopes, and one stationary
Markov policy held fixed across the two reward models.  These conditions establish
the common trajectory law and bound every cut-pair tail.  Within our global
product-kernel model, this policy-wise step uses only the common kernel and does
not invoke factorization.  The execution-regret result adds product dynamics and
actions, complete cross-team reward deletion in the surrogate, and team-optimal
planning so that the cut optimum is an executable product policy.  A bounded-error
team planner adds $\bar\epsilon_{\calP}$ through
Corollary~\ref{cor:approximate}.  Both guarantees are conditional on the initial
state, fixed partition, and valid parameter bounds.

\section{Certificate-Guided Partitioning}
\label{sec:method}

RCSD denotes the certificate construction in Eq.~\eqref{eq:certificate} together
with a capacity-valid partition solver.  RCSD-Exact minimizes the certificate
exactly; RCSD-Greedy and RCSD-MnM-sum are scalable heuristics whose outputs retain
valid certificates but need not minimize them.  After partitioning, exact
team-local planning returns the product-optimal cut policy used by
Corollary~\ref{cor:regret}.  Because total pair affinity is partition-independent,
\begin{equation}
\min_{\calP:\,|C|\le L}B_{\calP}(x)
\Longleftrightarrow
\max_{\calP:\,|C|\le L}
\sum_{C\in\calP}\sum_{i<j\in C}q_{ij}(x).
\label{eq:capacity-objective}
\end{equation}
Equation~\eqref{eq:capacity-objective} is the known bounded edge-weight coalition
objective~\cite{levinger2024bounded}.  For exact optimization, let
$W(C)=\sum_{i<j\in C}q_{ij}$ and fix the least-indexed agent in each remaining set
$S$.  The subset recurrence
\begin{equation}
F(S)=
\max_{\substack{C\subseteq S:\,\min S\in C\\1\le |C|\le L}}
\{W(C)+F(S\setminus C)\},\qquad F(\varnothing)=0,
\label{eq:subset-dp-main}
\end{equation}
enumerates the unique block containing $\min S$ and then recurses.  Every feasible
partition appears once at that decision level, so additivity of retained affinity
gives optimal substructure.  RCSD-Exact uses this dynamic program at small $n$.

At larger $n$, RCSD-Greedy repeatedly merges the feasible pair of blocks with the
largest positive rescued affinity.  RCSD-MnM-sum instead applies deterministic
matching and contraction with cross-block affinities summed after each contraction.
Both return capacity-valid partitions and therefore retain the policy-wise
certificate for their outputs, but neither implementation is assigned an
approximation ratio here.  Full pseudocode, tie-breaking, and correctness details
appear in Appendix~\ref{app:algorithms}.

Pair construction takes $O(n^2)$ time and space.  For fixed capacity, our exact
subset DP takes $O(n^L L^2+2^n n^{L-1})$ time and
$O(2^n+n^L+n^2)$ space.  The direct greedy implementation uses at most $n-1$
merges and a conservative $O(n^3L^2)$ time bound; no approximation guarantee is
asserted.  A fully connected directed channel within each team uses
\begin{equation}
\sum_{C\in\calP}|C|(|C|-1)\le n(L-1)
\label{eq:communication}
\end{equation}
persistent links per decision, versus $n(n-1)$ for one centralized team.  This is
a link count, not measured bytes or latency, and excludes one-time membership
announcement.  If each agent has at most $S$ states and $A$ actions, a size-$L$
team has at most $S^L$ joint states and $A^L$ joint actions.  RCSD caps the local
planning dimension but does not remove its exponential dependence on $L$.

The certificate also exposes the capacity trade-off.  Let $\calP_L^*$ attain the
minimum certificate over partitions with block size at
most $L$, and write $B_L^*(x)=B_{\calP_L^*}(x)$.  Because the feasible sets are nested,
$B_{L+1}^*(x)\le B_L^*(x)$, while Eq.~\eqref{eq:communication} increases the
allowed link envelope linearly in $L$ and the tabular planning dimension can grow
exponentially.  Given a declared value-loss tolerance $\delta$ and a valid bound
$\bar\epsilon_L(x)$ on the aggregate cut-MDP planning error of the unchanged
stationary Markov team policy returned for $\calP_L^*$, a designer may select the
smallest acceptable $L$ satisfying
$2B_L^*(x)+\bar\epsilon_L(x)\le\delta$.  This is a sufficient certification rule,
and it allows $\bar\epsilon_L$ to vary independently of capacity and actual
communication to underfill its envelope.  Conservative physical envelopes
preserve the rule but can require a larger capacity.  For a heuristic partition
$\calP_L$, the same rule uses its returned $B_{\calP_L}(x)$ and corresponding
planning-error bound.

\begin{table}[t]
\caption{RCSD construction for one initial state.}
\label{tab:rcsd-algorithm}
\centering
\small
\renewcommand{\arraystretch}{1.02}
\begin{tabular}{@{}>{\raggedright\arraybackslash}p{0.12\columnwidth}>{\raggedright\arraybackslash}p{0.80\columnwidth}@{}}
\toprule
Stage & Operation\\
\midrule
Input & Initial state $x$, capacity $L$, discount $\gamma$, valid
$(v_i,R_{ij},b_{ij})$, a partition solver, and a team planner.\\
Edges & For every $i<j$, compute $\lambda_{ij}$, $\tau_{ij}(x)$, and
$q_{ij}(x)$ by Eqs.~\eqref{eq:tau}--\eqref{eq:certificate}.\\
Partition & Solve Eq.~\eqref{eq:capacity-objective} exactly or heuristically,
subject to $|C|\le L$.\\
Planning & Form each induced cut-MDP component and obtain
$\widehat\pi_C$ independently.\\
Output & Return $\calP$, $B_{\calP}(x)$, and
$\widehat\pi_{\calP}=\prod_C\widehat\pi_C$.\\
Run & Announce the fixed teams and execute their policies in the
original MDP.\\
\bottomrule
\end{tabular}
\end{table}

The partition is constructed without planned values or realized regret, so its
certificate is available before team planning.  Theorem~\ref{thm:policy} applies
to every feasible output; Corollaries~\ref{cor:regret} and
\ref{cor:approximate} connect the returned team policy to centralized execution
regret.

There are two distinct computational bottlenecks.  Pair construction and scalable
partitioning are polynomial in the tested implementation.  Team planning remains
exponential in the capacity for dense tabular models, reflecting the underlying
joint-decision problem rather than the partitioning routine.  Accordingly, our large-
$n$ experiment isolates partition construction, while exact five-agent experiments
measure policy quality.  We report the two arms separately because fast graph
partitioning does not imply fast end-to-end multi-agent planning.

Persistence is essential to the information constraint.  If membership followed
the current visibility graph, a chain of local contacts
could merge many agents into one decision group, and membership could change within
the rollout~\cite{deweese2024limdp}.  The imposed capacity would then no longer
bound the information scope of a local planner.  A persistent partition makes the
resource statement exact: each policy sees at most $L$ agents, while physical
agents continue to move and cross-team rewards remain present during execution.
Deleting those rewards only in the surrogate is what connects the communication
decision to Theorem~\ref{thm:policy}; physically removing interactions would define
a different control problem.

\section{Experimental Design}
\label{sec:experiments}

We evaluate RCSD through finite-MDP verification, exact partition utility when the
certificate inputs are informative or incomplete, heuristic objective quality,
partition-construction scaling, and stochastic two-dimensional transfer.  The
omitted-feature family makes compatible teams depend on information absent from
the certificate.  Every policy-quality estimand uses exact optimization over its
stated policy class, avoiding training variance.

We verify the assumptions and bounds at three levels.  We exhaustively test the
contact lemma over 648 bounded-motion settings
(124,380 transitions) and evaluate the policy-wise bound for 2,048 deterministic
policies over signed asymmetric rewards.  We add 200 randomized policies on 40
stochastic instances and 1,716 statewise factorization/regret comparisons across
96 capacity partitions.  Policy values come from linear solves and optima from
converged dynamic programming.  Counterexamples obtained by changing reward
timing, omitting a directed envelope, or permitting reward-history reaction mark
the boundary of the assumptions; separate witnesses attain both constants.

To isolate partition selection while exactly evaluating every feasible
equal-communication partition, we use the Meeting-Port Corridor, a controlled
five-agent family with seven
corridor positions, three unit-step actions, deterministic clipped motion,
$\gamma=0.85$, and interaction range zero.  Thus the centralized table has
$7^5=16{,}807$ states and
$3^5=243$ joint actions.  Agent $i$ receives home-deviation and movement costs,
while pair $(i,j)$ earns $\lambda_{ij}$ only when both agents wait at its meeting
port.  The informative family starts at homes $(0,1,3,5,6)$; each port is a floor/ceiling midpoint,
so first possible port contact equals $\lceil|h_i-h_j|/2\rceil$.  Envelopes are
uniform on $[0.45,1.35]$.  The omitted-feature control starts every agent at 3,
uses near-equal envelopes on $[0.78,0.82]$, and randomly assigns ports from a fixed
multiset.  Every certified delay is then zero, while useful grouping depends on
incompatible ports that RCSD does not observe.  This control changes several
construction parameters and therefore limits family-independent claims; it is not a
single-variable causal ablation.

We evaluate 4,500 instances from each family, each at $L\in\{2,3\}$.  At $L=2$ we
evaluate all 15 partitions of shape $2+2+1$; at
$L=3$, all ten of shape $3+2$.  This gives 18,000 seed/capacity strata and
225,000 complete equal-shape partition rows.  The 4,500 underfilled
visibility-components rows are kept separate.  For every partition, team policies
are solved in the cut MDP
and their deterministic infinite-horizon return is evaluated in the original MDP
by exact prefix--cycle summation.  Regret is normalized by
$\sum_{i<j}\lambda_{ij}/(1-\gamma)$.

For comparison, the uniform baseline is the exact mean over the feasible set.  The
distance-only
selector uses $1/(1+D_{ij})$, and the envelope-only selector removes the delay and
uses $\lambda_{ij}/(1-\gamma)$; RCSD-Exact uses $q_{ij}$.  A current-contact
selector and a hindsight regret oracle are secondary diagnostics.  The paired unit is the seed, with
$L=2,3$ averaged
within seed before each primary test ($n=4{,}500$).  For comparator $Q$, aggregate regret
reduction is $1-\sum_s g_{\mathrm{RCSD},s}/\sum_s g_{Q,s}$.  We report paired mean
differences, paired standardized effects $d_z$~\cite{lakens2013effects}, and win
rates.  Percentile intervals use 10,000 paired-seed
bootstrap draws~\cite{efron1993bootstrap}; 100,000 one-sided paired sign flips are
Holm-corrected over the three primary comparisons~\cite{holm1979multiple}.
Certificate--regret ranking uses within-stratum Spearman correlation
\cite{spearman1904association} with a seed-clustered bootstrap.
Distance-only and envelope-only have the same partition support, planner, and link
budget as RCSD, so they isolate the two factors in its affinity.  They are not
intended as claims of dominance over learned or dynamically regrouping methods
with different information structures.

We compare three solver variants: RCSD-Exact, RCSD-Greedy, and RCSD-MnM-sum.
Each uses 240 seeds at
$n=8$, $L\in\{2,4\}$, and at $n=12$, $L\in\{2,3,4\}$.  Scaling uses 300 seeds for each
combination $n\in\{24,48,100\}$ and $L\in\{2,4\}$ with random 2-D positions,
speeds, ranges, and envelopes.  This arm constructs partitions only; it does not
solve a 100-agent MDP or support a 100-agent value claim.

Appendix~\ref{app:twod} adds connected obstacle grids with stochastic slip and
heterogeneous speeds, ranges, and rewards.  In 128 four-agent instances, exact
centralized and team dynamic programming measure unrestricted stationary-Markov
regret over all three $2+2$ partitions.  In 120 maps with
$n\in\{8,12,16,20\}$, a partition-independent library of four waypoint
controllers makes end-to-end evaluation tractable; this tier reports only
controller-library regret, with $L=2,4$ averaged within map for paired inference.

\section{Results}
\label{sec:results}

\begin{table*}[t]
\caption{Summary of empirical results.  Utility effects average capacities within each of
4,500 instances; rank correlation is computed within instance/capacity strata.}
\label{tab:evaluation-overview}
\centering
\small
\begin{tabular}{lll}
\toprule
Evaluation & Metric & Result\\
\midrule
Partition utility & ARR vs. Uniform / Distance / Envelope & .5604 / .2880 / .2531\\
Certificate ranking & median $\rho$ / clustered 95\% CI & .7143 / [.7091,.7212]\\
Greedy objective quality & median / p90 / max normalized gap & .01110 / .05287 / .13845\\
100-agent construction & RCSD-Greedy / RCSD-MnM-sum median at $L=4$ & .0545 / .2333 s\\
Omitted-feature control & ARR / signed median $\rho$ / mean-diff. 95\% CI & .03096 / .1643 / [.00155,.00281]\\
Random 2-D transfer & exact utility; large library utility / bound violations & positive; mixed / 0\\
\bottomrule
\end{tabular}
\end{table*}

Across the verification suite, all finite-state instances satisfy the scoped
inequalities, factorization identity,
and capacity constraints, with largest numerical residual
$1.42\times10^{-14}$.  One-state fixtures attain the constants $B$ and $2B$.
Counterexamples with post-transition rewards, one-sided envelopes, and
reward-reactive policies violate the unmodified formula, showing that the
corresponding assumptions are substantive.
Table~\ref{tab:validation-main} summarizes coverage; full constructions and
boundary outcomes are in Appendix~\ref{app:validation}.

\begin{table}[t]
\caption{Finite-MDP verification of the theorem predictions.}
\label{tab:validation-main}
\centering
\small
\begin{tabular}{lrr}
\toprule
Property & Evaluations & Outcome\\
\midrule
Pre-contact reachability & 124,380 & 0 early contacts\\
Policy-wise deletion & 2,248 & 0 violations\\
Product/regret bounds & 1,716 & 0 violations\\
Tightness witnesses & 2 & $B$ / $2B$ attained\\
Numerical solutions & all & max.\ $1.42{\times}10^{-14}$\\
\bottomrule
\end{tabular}
\end{table}

The random two-dimensional study extends this audit to obstacles, stochastic
motion, and heterogeneous physical parameters.  Across 384 exact four-agent
partitions and 1,440 large-tier method evaluations, no deletion or regret bound is
violated.  In the exact tier, RCSD's aggregate unrestricted regret is 21.6\%,
24.1\%, and 14.8\% lower than Uniform, Distance, and Current.  Paired raw evidence
is clear for Uniform and Distance, borderline for Current, while the 2.6\%
Envelope difference is inconclusive.  The larger controller-library result is
mixed.  Pooled equally across the four $n$ strata, raw comparator-minus-RCSD
differences (pointwise 95\% intervals) for Distance,
Current, Uniform, Envelope, and Value-MIP are $10.05$ $[5.36,14.97]$, $4.90$
$[.91,8.81]$, $-1.89$ $[-8.83,4.78]$, $-5.80$ $[-11.23,-.57]$, and $-5.44$
$[-8.66,-2.30]$.  RCSD's median/p95 regret-to-certificate ratios are .076/.206
in the exact tier and .070/.175 in the library tier.  Appendix~\ref{app:twod}
reports normalization details and selector-plus-team-planning costs.

\begin{figure*}[t]
  \centering
  \includegraphics[width=\textwidth]{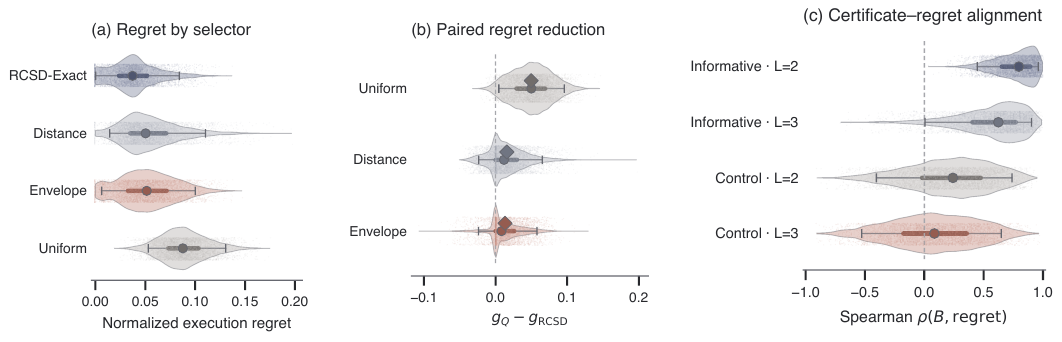}
  \caption{\textbf{Partition quality on Meeting-Port Corridor families.}
  Density envelopes and faint unit marks use all 4,500 paired seeds after
  averaging capacities in (a,b).  Thin/thick summaries in (a,c) are 5--95\%
  ranges/IQRs, circles are medians, and diamonds with capped segments in (b)
  are paired means with 95\% bootstrap intervals for
  $g_Q-g_{\mathrm{RCSD}}$, so positive values favor RCSD.  Panel (c) shows
  4,500 correlations per family--capacity cell (18,000 total).  RCSD-Exact
  lowers regret in the informative family; the control shift is much smaller.}
  \Description{Three distribution panels show normalized regret, paired regret
  reductions with confidence intervals, and certificate-to-regret rank
  correlations for informative and omitted-feature controlled task families.}
  \label{fig:utility}
\end{figure*}

Figure~\ref{fig:utility} and Table~\ref{tab:utility} report all primary
informative-family
comparisons.  Mean normalized regret is 0.03910 for RCSD-Exact, versus 0.08895 for
Uniform, 0.05492 for Distance, and 0.05235 for Envelope.  Across 9,000
instance/capacity strata, median Spearman correlation
between $B_{\calP}$ and actual regret is 0.7143; 8,781/9,000 are positive, and the
seed-clustered 95\% interval for the median is $[0.7091,0.7212]$.  These results
support the mechanism on the controlled informative construction, not a universal
ordering of partitions.

\begin{table}[t]
\caption{Informative-family paired utility results.  CI is for mean $g_Q-g_{\mathrm{RCSD}}$;
the paired unit is one of 4,500 seeds after averaging capacities.}
\label{tab:utility}
\centering
\small
\begin{tabular}{lrrrr}
\toprule
$Q$ & ARR & Mean diff. [95\% CI] & $d_z$ & Holm $p$\\
\midrule
Uniform & .5604 & .04985 [.04905,.05068] & 1.791 & .000030\\
Distance & .2880 & .01581 [.01504,.01656] & .604 & .000030\\
Envelope & .2531 & .01325 [.01255,.01396] & .554 & .000030\\
\bottomrule
\end{tabular}
\end{table}

Seed-paired bootstrap 95\% intervals for ARR are
$[.5530,.5682]$, $[.2763,.2994]$, and $[.2415,.2647]$ against Uniform,
Distance, and Envelope, respectively.

The three comparisons isolate different information.  Uniform measures the value
of selecting any equal-communication partition; Distance keeps geometry but removes
reward magnitude; Envelope keeps reward magnitude but removes contact delay.  Since
all three exact mechanisms optimize over the identical shape support, their regret
differences cannot be attributed to different link counts or a stronger partition
solver.  The current-contact selector is a secondary diagnostic over the same fixed
shape; the visibility-components diagnostic is secondary because its $L=3$
informative-family partition keeps
shape $2+2+1$ rather than the primary $3+2$ shape.  The hindsight regret oracle is
reported only to quantify how much selection information remains outside the
certificate.

The within-stratum ranking in Figure~\ref{fig:utility}(c) is the more informative
measure of selection resolution because every correlation
compares partitions of the same generated MDP and capacity.  Its clustered interval
retains the seed, rather than the individual partition, as the source of independent
instance-level variation.  This rank statistic measures partition-selection
resolution; it neither calibrates $B_{\calP}$ to typical regret nor establishes
numerical tightness of the $2B_{\calP}$ bound.

In the omitted-feature control family, ARR between RCSD-Exact and Uniform is
0.03096.  The paired mean difference is 0.00219 (interval
$[0.00155,0.00281]$), with $d_z=0.1000$ and win rate 0.5164.  Its signed median
rank correlation is 0.1643; the clustered interval is $[0.1515,0.1758]$.
The ARR is about one eighteenth of the informative
Uniform comparison, the standardized effect is small, and the win rate is near
one half.  Overlapping partitions can induce residual rank association even when
port compatibility is absent from $q_{ij}$.  Together, the two families show that
RCSD's advantage is strongest when the certificate inputs encode
coordination-relevant variation.  Because several generator parameters differ,
the control is not a single-variable ablation; additional control diagnostics
appear in Appendix~\ref{app:control}.

\begin{figure*}[t]
  \centering
  \includegraphics[width=\textwidth]{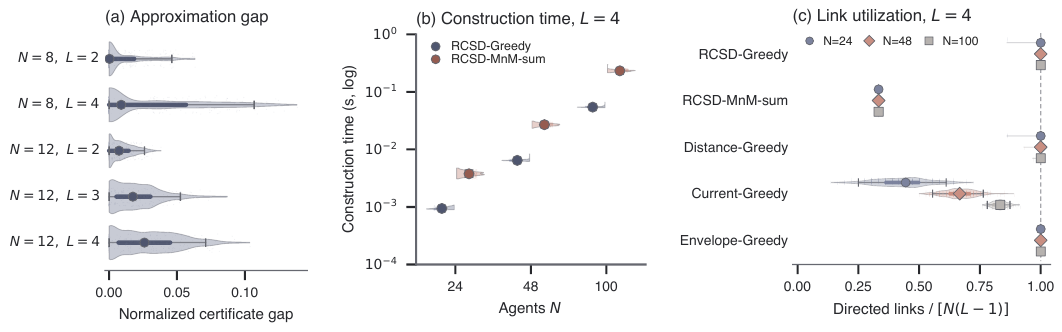}
  \caption{\textbf{Certificate optimization and scaling.}  Density envelopes
  and faint unit marks use all seed-level values (ties overplot); thin/thick
  intervals are empirical 5--95\% ranges/IQRs and circles are medians.
  Panel (a) uses 240 seeds per exact-oracle cell; (b) uses 300 runtime seeds
  per method and agent count at $L=4$ (log scale); and (c) keeps the 300 seeds
  in each method--agent-count stratum separate.  The $x=1$ reference is the
  $N(L-1)$ capacity ceiling.  Except RCSD-MnM-sum, panel (c) reports the
  corresponding greedy construction.}
  \Description{Three seed-level quantile-forest panels show greedy certificate
  gaps, solver-runtime distributions as agent count grows, and within-team
  communication utilization for several mechanisms.}
  \label{fig:scaling}
\end{figure*}

\begin{table*}[t]
\caption{Certificate gap to the exact capacity-partition optimum over 240
instances per cell.}
\label{tab:solver-quality-main}
\centering
\small
\begin{tabular}{rrlrrr}
\toprule
$n$ & $L$ & Solver & Median & p90 & Maximum\\
\midrule
8 & 2 & RCSD-Greedy & .00025 & .04024 & .06324\\
8 & 2 & RCSD-MnM-sum & .00000 & .00000 & .00000\\
8 & 4 & RCSD-Greedy & .00890 & .09264 & .13845\\
8 & 4 & RCSD-MnM-sum & .30076 & .34496 & .39753\\
12 & 2 & RCSD-Greedy & .00725 & .02165 & .03823\\
12 & 2 & RCSD-MnM-sum & .00000 & .00000 & .00000\\
12 & 3 & RCSD-Greedy & .01747 & .04325 & .08698\\
12 & 3 & RCSD-MnM-sum & .12541 & .14642 & .17929\\
12 & 4 & RCSD-Greedy & .02600 & .05885 & .10370\\
12 & 4 & RCSD-MnM-sum & .22599 & .25123 & .26840\\
\bottomrule
\end{tabular}
\end{table*}

Across 1,200 oracle-sized cases, RCSD-Greedy's normalized certificate gap has median
0.01110, p90 0.05287, and maximum 0.13845
(Figure~\ref{fig:scaling} and Table~\ref{tab:solver-quality-main}).  This is
empirical proximity, not an approximation theorem.  Every large-instance output
is capacity valid.  At $n=100,L=4$, median (95th-percentile) seconds are
0.0545 (0.0569) for RCSD-Greedy and 0.2333 (0.2469) for RCSD-MnM-sum.
Timings begin after affinity matrices have been formed and exclude instance
generation, affinity construction, and team-MDP planning; they establish
partition-construction feasibility only.  These implementation-specific
timings are not hardware-independent performance guarantees.

Across all measured cells, RCSD-Greedy is faster and its median normalized gap
remains below 0.027.  RCSD-MnM-sum is exact in the two tested $L=2$ cells and
less accurate at larger capacities under our summed-contraction completion.
These are empirical comparisons rather than approximation
guarantees~\cite{levinger2024bounded}.

Figure~\ref{fig:scaling}(c) also distinguishes capacity from utilization.
Envelope-Greedy fills the allowance in all plotted cases; RCSD-Greedy and
Distance-Greedy do so in 898/900 and 894/900 cases.  RCSD-MnM-sum at $L=4$
uses one third of the maximum
because its perfect first-round matching leaves no singleton, so the implementation
stops with pairs.  Current's median fraction rises from 0.444 to 0.667 to 0.833
at $N=24,48,100$, with overall range 0.139--0.913.
Underfilling can reduce communication, but it is not a free performance comparison;
that is why only exact equal-shape partitions enter the primary utility tests.

\section{Discussion}

Informative-family selectors share the instance, capacity, team planner,
partition set, and communication budget, so their differences isolate edge
scoring.  The results support $q_{ij}$ when its inputs encode
coordination-relevant variation.  In the omitted-feature control, certified
delays are zero and port compatibility is absent from $q_{ij}$; standardized
effect and win-rate shifts are much smaller.  Because other generator parameters
also differ, this contrast defines a scope boundary rather than a causal ablation.

Capacity has different effects on certificate quality and computation.  For the
inequality-constrained objective $|C|\le L$, the feasible set at capacity
$L$ is contained in the feasible set at $L+1$.  The exact minimum certificate is
therefore nonincreasing in $L$.  At the extremes, $L=1$ cuts every pair, while
$L=n$ permits one team and $B_{\calP}=0$.  The latter recovers centralized
information and joint planning rather than resolving the resource constraint.
Increasing $L$ can reduce the worst-case deletion budget, but it raises the
directed-link envelope $n(L-1)$ and expands a tabular team's state--action space
exponentially.  The capacity rule in Section~\ref{sec:method} makes this trade-off
explicit: a designer can choose the smallest $L$ whose certificate and planning
error meet a declared loss tolerance, then apply a separate application cost
model for communication and computation.

The 100-agent experiment isolates graph partitioning and does not time affinity
construction or team-MDP planning.  The corridor instead permits exact evaluation
of every feasible partition, while the random two-dimensional study bridges these
regimes: unrestricted planning remains exact at four agents, and 8--20-agent
evaluation is exact only within a fixed controller library.  This separation
prevents scalable construction from being mistaken for scalable unrestricted
planning and keeps each selector comparison within the same partition shape and
policy class.

The certificate is intentionally conservative.  Equation~\eqref{eq:certificate}
pessimistically charges the full pair envelope at
every step after the certified contact-time lower bound.  This makes the quantity policy-uniform
and computable before planning.  A sharper score could use obstacles, reachable
sets, occupancy bounds, or policy restrictions, but each requires additional
verified information and a new argument bounding discounted deleted reward.
Empirical contact frequencies or point predictions from a learned model may be
useful ranking features, but they enter the guarantee only through valid
uncertainty bounds.  RCSD can therefore become sharper as additional certified
reachability information becomes available without changing the perturbation
argument.

Exact RCSD minimizes the available upper bound within its partition class; it
need not minimize realized regret.  Deleted rewards may cancel, contact may occur
later than the speed limit permits, and some interactions may be irrelevant to
the optimal action.  Conversely, a heuristic solver returns a valid certificate
for its feasible partition even when it misses the minimum.  This distinction
explains why the experiments report certificate validity, objective quality, and
execution regret separately: each measures a different property of the method.

By treating subteams as persistent, RCSD makes communication membership a resource
decision rather than a
visualization of the current reward graph.  Its scientific role is to connect a
fixed bound on each planner's information scope with a bound on execution loss in
the unchanged physical MDP.  Cross-team interactions remain present during
execution; only the team-planning surrogate omits them.  This connection between
an executable information structure and policy value is what distinguishes the
setting from state-dependent visibility groups.

These guarantees have a specific scope.  The policy-wise certificate assumes
finite state and action spaces,
pre-transition finite-range rewards, valid almost-sure motion and reward
envelopes, symmetric pair ranges, and the same stationary Markov policy in both
reward models.  Its proof uses the common trajectory kernel without invoking the
global product factorization.  The execution-regret corollary additionally uses product transitions and actions,
complete deletion of cross-team rewards in the surrogate, and team-optimal cut
planning so that the selected policy is independently executable.  With an
approximate stationary Markov team planner, Corollary~\ref{cor:approximate} adds
its bounded cut-MDP planning error.  Coupled dynamics, shared action constraints,
higher-order rewards, dynamic repartitioning, and general Dec-POMDP observation
structures require different decompositions or guarantees
\cite{bernstein2002complexity,oliehoek2016decpos}.

\section{Conclusion}

RCSD turns heterogeneous reachability and finite-range reward envelopes into a
computable certificate for capacity-limited persistent subteams.  Its
cut-affinity objective bounds unchanged-policy reward deletion and, under product
dynamics, centralized execution regret.  Exact and stochastic tests produced no
certificate violations; partition experiments scaled to 100-agent affinity
graphs.  Across selectors, mixed large-tier rankings show that certificate
tightness remains central to empirical team selection.  Together, these results
establish RCSD as a certifiable basis for persistent multi-agent decomposition.

\bibliographystyle{ACM-Reference-Format}
\bibliography{references}

\appendix
\input{appendix}

\end{document}

%% file: appendix.tex
\section{Assumptions and Expanded Proofs}
\label{app:proofs}

This appendix makes every scope condition used by the certificate explicit and
expands the compact arguments in the main paper.  The setting is a finite
discounted MDP.  Local states need not consist only of positions: the map
$p_i:\mathcal X_i\rightarrow\mathcal Z$ extracts position in a shared metric
space.  The motion inequality holds almost surely for every transition with
positive probability.  An expectation-only speed bound is insufficient for the
pathwise zero-reward prefix.  Rewards use $(X_t,A_t)$ before the transition to
$X_{t+1}$, pair ranges are symmetric, and both ordered envelopes are included in
$\lambda_{ij}$.  The partition is selected for one initial state and held fixed.

\subsection{Pathwise contact bound}

\begin{proof}[Expanded proof of Lemma~\ref{lem:contact}]
Let $X_t=(X_{1,t},\ldots,X_{n,t})$ and abbreviate
$D_t=d(p_i(X_{i,t}),p_j(X_{j,t}))$.  Repeated reverse triangle inequality gives,
pathwise,
\begin{align}
D_t
&\ge D_0-\sum_{k=0}^{t-1}\bigl[
d(p_i(X_{i,k+1}),p_i(X_{i,k}))\nonumber\\
&\hspace{33mm}+
d(p_j(X_{j,k+1}),p_j(X_{j,k}))\bigr]\nonumber\\
&\ge D_0-t(v_i+v_j).
\label{eq:pathwise-distance}
\end{align}
If $D_0\le R_{ij}$, then $\tau_{ij}=0$ and there is no integer
$t<\tau_{ij}$.  Otherwise write $\delta=D_0-R_{ij}>0$.  When
$s_{ij}=v_i+v_j>0$, every integer $t<\lceil\delta/s_{ij}\rceil$ satisfies
$ts_{ij}<\delta$, and Eq.~\eqref{eq:pathwise-distance} yields
$D_t>D_0-\delta=R_{ij}$.  When $s_{ij}=0$, the same equation yields
$D_t\ge D_0>R_{ij}$ for all $t$ and $\tau_{ij}=\infty$.  Finite-range support
makes $r_{ij}$ and $r_{ji}$ zero in either case.  The index is unshifted precisely
because the return term at index $t$ evaluates the pre-transition state $X_t$.
\end{proof}

The bound can be conservative.  Walls, clipped motion, stochastic slip, or an
agent's policy can postpone or prevent contact.  Such delays do not threaten the
certificate: they only extend the actual zero-reward prefix beyond the certified
one.

\subsection{Policy-wise value difference}

For each cut pair, define its deleted reward and their sum by
\begin{align}
\delta_{ij}(x,a)
&=r_{ij}(x_i,x_j,a_i,a_j)\nonumber\\
&\quad+r_{ji}(x_j,x_i,a_j,a_i),\nonumber\\
\Delta_{\calP}(x,a)
&=r(x,a)-r_{\calP}(x,a)\nonumber\\
&=\sum_{\substack{i<j:\\ \calP(i)\ne\calP(j)}}\delta_{ij}(x,a).
\label{eq:deleted-reward}
\end{align}

\begin{proof}[Expanded proof of Theorem~\ref{thm:policy}]
The models share the initial state, transition kernel, and stationary action
kernel $\pi(a\mid x)$.  Couple their randomness to produce the same
$(X_t,A_t)$ path.  Lemma~\ref{lem:contact} and both directed envelopes give
\begin{equation}
|\Delta_{\calP}(X_t,A_t)|
\le\sum_{\substack{i<j:\\ \calP(i)\ne\calP(j)}}
\lambda_{ij}\Ind\{t\ge\tau_{ij}(x)\}.
\label{eq:deleted-pathwise}
\end{equation}
Consequently,
\begin{align}
\left|V_{\M}^{\pi}(x)-V_{\MP}^{\pi}(x)\right|
&=\left|\mathbb E_x^\pi\sum_{t=0}^{\infty}
\gamma^t\Delta_{\calP}(X_t,A_t)\right|\nonumber\\
&\le\mathbb E_x^\pi\sum_{t=0}^{\infty}
\gamma^t|\Delta_{\calP}(X_t,A_t)|\nonumber\\
&\le\sum_{\substack{i<j:\\ \calP(i)\ne\calP(j)}}
\lambda_{ij}\sum_{t=\tau_{ij}(x)}^{\infty}\gamma^t
=B_{\calP}(x).
\end{align}
Bounded rewards and $0<\gamma<1$ justify exchanging the finite pair sum,
expectation, and discounted series.
\end{proof}

The proof also covers a reward-blind history-dependent action kernel if exactly the
same kernel is held fixed across models.  It does not cover a controller that reads
deleted rewards and changes future actions: then the coupled state--action paths can
diverge even though the physical kernel is unchanged.

\subsection{Cut-MDP factorization}

\begin{proof}[Expanded proof of Proposition~\ref{prop:factorization}]
For each $C\in\calP$, define
\begin{align}
r_C(x_C,a_C)&=\sum_{i\in C}r_i(x_i,a_i)
+\sum_{\substack{i,j\in C:\\i\ne j}}
r_{ij}(x_i,x_j,a_i,a_j),\\
K_C(x_C'\mid x_C,a_C)&=\prod_{i\in C}K_i(x_i'\mid x_i,a_i).
\end{align}
Then $r_{\calP}=\sum_Cr_C$ and $K=\prod_CK_C$.  Let $V_C^*$ be the unique
discounted Bellman fixed point for team $C$ and set
$\overline V(x)=\sum_CV_C^*(x_C)$.  Applying the centralized cut-MDP Bellman
operator yields
\begin{align}
(T_{\calP}\overline V)(x)
&=\max_{a\in\prod_C\mathcal A_C}
\sum_C\left[r_C(x_C,a_C)+
\gamma\mathbb E_{K_C}V_C^*(X_C')\right]\nonumber\\
&=\sum_C\max_{a_C\in\mathcal A_C}
\left[r_C(x_C,a_C)+
\gamma\mathbb E_{K_C}V_C^*(X_C')\right]\nonumber\\
&=\sum_CV_C^*(x_C)=\overline V(x).
\end{align}
The discounted Bellman operator is a contraction, hence
$\overline V=V_{\MP}^*$.  Selecting a maximizing action separately in each team
gives a deterministic Markov product policy attaining this value.
\end{proof}

This separation needs Cartesian team action spaces, product transitions, and the
absence of all cross-team reward terms.  Shared action constraints, coupled
dynamics, or undeleted higher-order rewards crossing teams invalidate the exchange
of the joint maximum and team sum.

\subsection{Two-sided transfer and tight constants}

\begin{proof}[Expanded proof of Corollary~\ref{cor:regret}]
A finite discounted MDP admits a stationary Markov optimum $\pi^*$.  Insert cut-MDP
values:
\begin{align}
V_{\M}^{\pi^*}-V_{\M}^{\widehat\pi_{\calP}}
={}&(V_{\M}^{\pi^*}-V_{\MP}^{\pi^*})\nonumber\\
&+(V_{\MP}^{\pi^*}-V_{\MP}^{\widehat\pi_{\calP}})\nonumber\\
&+(V_{\MP}^{\widehat\pi_{\calP}}-
V_{\M}^{\widehat\pi_{\calP}}).
\end{align}
Cut optimality makes the middle term nonpositive.  Theorem~\ref{thm:policy}
upper-bounds each remaining signed term by $B_{\calP}(x)$, proving the result.
\end{proof}

\paragraph{Tight fixed-policy constant.}
Use two singleton teams in a one-state MDP, $\tau_{12}=0$, and one constant deleted
ordered reward $b>0$.  There is a unique policy, and the value difference is
$b/(1-\gamma)=B_{\calP}$.

\paragraph{Tight factor two.}
Again use two singleton teams and one state, now with two actions per agent.  All
local and cut rewards are zero.  The deleted reward is $+b$ at joint action $(1,1)$,
$-b$ at $(0,0)$, and zero otherwise.  Both are cut-optimal product policies.  If
the cut solver returns the $(0,0)$ tie while the original optimum selects $(1,1)$,
their original values are $-b/(1-\gamma)$ and $+b/(1-\gamma)$.  The regret equals
$2b/(1-\gamma)=2B_{\calP}$.  Thus no smaller universal constant is possible without
an additional sign, tie-breaking, or structural condition.  This worst-case witness
does not imply that typical instances approach the bound.

\section{Algorithms and Correctness}
\label{app:algorithms}

\subsection{RCSD algorithm}

The following pseudocode maps the certificate construction to partitioned planning.

\begin{enumerate}
  \item \textbf{Input:} initial state $x$, capacity $L$, discount $\gamma$, valid
  bounds $(v_i,R_{ij},b_{ij})$, and a capacity-partition solver.
  \item For every unordered pair $i<j$, set
  $\lambda_{ij}\leftarrow b_{ij}+b_{ji}$, compute $\tau_{ij}(x)$ by
  Eq.~\eqref{eq:tau}, and set
  $q_{ij}\leftarrow\lambda_{ij}\gamma^{\tau_{ij}}/(1-\gamma)$.
  \item Invoke the solver on the complete weighted graph $(q_{ij})$ and capacity
  $L$, returning a capacity-valid partition $\calP$.  The exact variant maximizes
  retained affinity; heuristic variants need not do so.
  \item For every $C\in\calP$, construct $(\mathcal X_C,\mathcal A_C,K_C,r_C)$
  and solve it for a team-optimal policy $\widehat\pi_C$.
  \item Return $\calP$ and product policy
  $\widehat\pi_{\calP}=\prod_C\widehat\pi_C$.
  \item Return the certificate
  $B_{\calP}(x)=\sum_{i<j:\calP(i)\ne\calP(j)}q_{ij}(x)$.
\end{enumerate}

Step 2 is the scientific mechanism.  Step 3 is a standard capacity-bounded
coalition optimization problem~\cite{levinger2024bounded}; replacing its solver
does not alter Theorem~\ref{thm:policy} for the returned partition.  The
$2B_{\calP}$ corollary additionally requires the team-optimal policies in Step 4;
an approximate planner retains the same-policy deletion certificate but does not
inherit that regret bound without an additional planning-error term.  Exact value
iteration is used in our finite experiments.

\subsection{Exact subset dynamic program}

For a remaining agent set $S$, fix $i=\min S$ and enumerate all blocks $C$ such
that $i\in C\subseteq S$ and $1\le|C|\le L$.  Let
$W(C)=\sum_{u<v\in C}q_{uv}$ and define
\begin{equation}
F(S)=\max_{\substack{C\subseteq S:\;\min S\in C\\1\le|C|\le L}}
\{W(C)+F(S\setminus C)\},\qquad F(\varnothing)=0.
\label{eq:subset-dp}
\end{equation}
Memoize the maximizing block and recurrence value.  Every feasible partition of a
nonempty $S$ contains exactly one block containing $\min S$; enumerating that block
and recursing lists every feasible partition exactly once at the recurrence level.
Additivity gives optimal substructure, and induction on $|S|$ proves exactness.
The implementation caches values and block backpointers and compares objectives
at absolute tolerance $10^{-12}$.  Within tolerance, ties prefer more internal
links and then the lexicographic partition; ``exact'' is relative to this tolerance.

For fixed $L$, there are $2^n$ subset states and at most
$\sum_{k=0}^{L-1}\binom{n-1}{k}=O(n^{L-1})$ candidate blocks per state.  Direct
block-affinity precomputation gives the conservative bounds reported in the main
paper.  The exact method is used at oracle sizes only.

\subsection{Direct feasible greedy merge}

\begin{enumerate}
  \item Initialize $\calP\leftarrow\{\{1\},\ldots,\{n\}\}$.
  \item Among pairs $A,B\in\calP$ satisfying $|A|+|B|\le L$, compute merge gain
  $G(A,B)=\sum_{i\in A,j\in B}q_{ij}$.
  \item If no feasible pair exists or the best gain is nonpositive, return
  $\calP$.  Otherwise replace $A,B$ by $A\cup B$ and repeat Step 2.
\end{enumerate}

Every merge preserves capacity, and $G(A,B)$ is exactly the cut certificate rescued
when the $A$--$B$ edges become internal.  Greedy therefore monotonically decreases
$B_{\calP}$, but monotonicity supplies no global approximation ratio.  Canonical
ties are resolved by the resulting directed link count and lexicographic partition.

\subsection{MnM-sum implementation}

Our baseline first computes a deterministic maximum weight matching among
singletons.  Each later round matches newly grown coalitions to remaining
singletons.  A contracted edge sums all cross-coalition $q_{ij}$ values.  Matched
coalitions grow by one; unmatched grown coalitions remain unchanged.  The process stops at
capacity or when no positive match remains.  ``MnM-sum'' records this summed
contraction because the published pseudocode does not uniquely prescribe weighted
contracted edges~\cite{levinger2024bounded}.  We treat it only as an implementation
baseline and do not transfer the paper's approximation statement to this completion.

\section{Finite-MDP Verification and Boundary Cases}
\label{app:validation}

The verification suite covers finite discounted MDPs with product state/action
spaces, independent bounded-motion transitions, pre-transition local and ordered-
pair rewards, fixed partitions, and stationary Markov policies.  It checks the
implementation against the analytical results and constructs boundary examples;
it is not a substitute for the proofs.  Learning, partial observation, dynamic
repartitioning, and LIMDP visibility policies are not evaluated.

\begin{table}[h]
\caption{Finite-MDP verification measurements and outcomes.}
\label{tab:theorem-validation}
\centering
\small
\begin{tabular}{@{}p{0.36\columnwidth}p{0.27\columnwidth}p{0.22\columnwidth}@{}}
\toprule
Measurement & Quantity evaluated & Result\\
\midrule
Pre-contact reachability & 648 settings & 0 early contacts\\
Same-policy value & maximum gap$-B$ & $-0.3547$\\
Extremal reward oracle & maximum gap$-B$ & $-0.1906$\\
Cut vs. product optimum & maximum value difference & 0.0\\
Execution regret & maximum regret$-2B$ & $-1.4287$\\
Numerical residual & maximum Bellman residual & $1.42\!\times\!10^{-14}$\\
Invalid-scope controls & excess over invalid bound & 1.0, 2.0, 3.0\\
Tight $B$ / $2B$ examples & observed/bound ratio & 1.0 / 1.0\\
Positive-regret example & centralized regret & 1.0\\
\bottomrule
\end{tabular}
\end{table}

\paragraph{Reachability enumeration.}
The sweep exhausts two-agent one-dimensional grids of widths two through five,
ranges zero through two, speed indicators zero or one, every initial state, and
all feasible action sequences through horizon six.  It expands 124,380 transitions
and checks 23,268 reachable states.  Sixty-eight cases have zero relative speed
outside range and correctly return infinite contact time and zero affinity.

\paragraph{Fixed-policy enumeration.}
The validation evaluates all 256 stationary joint policies for each of eight asymmetric
signed reward tables, totaling 2,048 policy evaluations.  It adds 200 seeded
randomized policies over 40 stochastic independent-motion instances, all with
nonzero slip.  Policy evaluation solves $(I-\gamma P_\pi)V=r_\pi$ rather than
sampling rollouts.  For deleted rewards it also solves the reward and its negation,
computing an extremal Markov-policy gap instead of relying only on sampled policies.

\paragraph{Optimization enumeration.}
Eighteen two-, three-, and four-agent fixtures generate 96 capacity-feasible
partitions and 1,716 statewise product-optimum/regret comparisons.  The centralized
cut optimum is compared with independently solved team optima; the product policy
is then evaluated under both cut and original rewards.

\paragraph{Verification procedure.}
Bounded grid trajectories establish the absence of range entry before
$\tau_{ij}$.  Original and cut values are then solved for every signed reward
table and stationary deterministic policy, followed by stochastic kernels and
randomized Markov policies.  For each optimization fixture and partition, we
compare the centralized cut optimum with independently solved team optima and
evaluate the product policy in both reward models.  Boundary constructions vary
reward timing, directed-envelope availability, and policy class, while separate
witnesses attain $B$ and $2B$.  Counts, extrema, residuals, and seeds are retained
with the numerical results.

\subsection{Counterexamples Beyond the Assumptions}

\begin{table}[h]
\caption{Counterexamples obtained after removing individual assumptions.}
\label{tab:assumption-counterexamples}
\centering
\small
\begin{tabular}{lrrr}
\toprule
Invalid extension & Gap & Wrong bound & Scoped bound\\
\midrule
Post-transition reward & 2.0 & 1.0 & 2.0\\
One directed envelope & 2.2 & 0.2 & 2.2\\
Reward-history reaction & 5.0 & 2.0 & 1.0$^\dagger$\\
\bottomrule
\end{tabular}
\vspace{1mm}
\parbox{0.96\columnwidth}{\footnotesize $^\dagger$The final number is the gap
for the fixed Markov comparator, not a bound on the reward-reactive policy.}
\end{table}

The timing example demonstrates an off-by-one: if reward
is defined on $X_{t+1}$, first contact at state time $\tau$ can contribute at return
index $\tau-1$.  The directed example enforces the ordered reward convention of
Eq.~\eqref{eq:reward-factorization}.  The history-dependent example shows that
Theorem~\ref{thm:policy} applies to a fixed policy rather than a
reward-observing controller.

\section{Meeting-Port Corridor Protocol}
\label{app:corridor}

\subsection{Generator and exact evaluation}

Positions are $x_i\in\{0,\ldots,6\}$, actions are
$u_i\in\{-1,0,+1\}$, and
$x_i'=\operatorname{clip}(x_i+u_i,0,6)$.  Local reward is
\begin{equation}
r_i(x_i,u_i)=-c_i|x_i-h_i|-\eta_i^-\Ind\{u_i=-1\}
-\eta_i^+\Ind\{u_i=+1\}.
\end{equation}
For a meeting port $m_{ij}$,
\begin{equation}
r_{ij}+r_{ji}=\lambda_{ij}
\Ind\{x_i=x_j=m_{ij}\}\Ind\{u_i=u_j=0\},
\end{equation}
The implementation stores this sum as one aggregate pair term; equivalently, each
ordered direction may be assigned $\lambda_{ij}/2$.  We draw
$c_i\sim U[0.025,0.050]$ and
$\eta_i^-,\eta_i^+\sim U[0.006,0.014]$.  The exact generator consumes random
draws in the order: five $c_i$, five $\eta_i^-$, five $\eta_i^+$, ten envelopes in
lexicographic pair order, then family-specific port randomness.  PCG64 is seeded
directly by the stored integer.

The informative family uses homes $(0,1,3,5,6)$ and envelopes
$U[0.45,1.35]$.  Each pair's port is a Bernoulli choice between the floor and
ceiling midpoint, so first possible port contact equals
$\tau_{ij}=\lceil|h_i-h_j|/2\rceil$.  In the omitted-feature control, all homes
are at position 3 and envelopes follow $U[0.78,0.82]$.  Its ports are a seeded
permutation of the multiset
\begin{equation*}
\{0,0,1,1,1,5,5,5,6,6\}.
\end{equation*}
This makes delay uninformative and hides incompatible-port structure from RCSD.

For a partition, each team MDP is solved independently.  The team policies form a
global product action.  Deterministic execution from the home state eventually
revisits a state; the infinite return is evaluated exactly as a finite discounted
prefix plus a geometric cycle.  The cut-return evaluation must agree with the sum
of team initial values within $2\times10^{-8}$; the observed maximum discrepancy is
$1.33\times10^{-9}$.

\subsection{Stratum-evaluation pseudocode}

\begin{enumerate}
  \item Solve the full-reward centralized MDP by value iteration.
  \item Solve every agent subset of size at most $L$ as a team MDP.
  \item Enumerate every canonical partition with the fixed equal-
  communication shape.
  \item Compute all $q_{ij}$ values and select exact RCSD, distance, envelope, current,
  uniform, and oracle partitions using only each selector's specified inputs.
  \item For each partition, assemble the product team policy; evaluate original
  and cut returns by prefix--cycle summation; save partition, $B_{\calP}$,
  centralized regret, normalized regret, residuals, and directed links.
  \item Evaluate capacity-valid underfilled solver/Pareto partitions separately,
  never mixing them into equal-shape primary tests.
\end{enumerate}

\begin{table*}[t]
\caption{Complete partition-mechanism specification.  Only the first four rows
enter the primary equal-shape comparisons.}
\label{tab:baselines}
\centering
\small
\begin{tabular}{llll}
\toprule
Mechanism & Information used & Partition support & Role\\
\midrule
RCSD (exact) & $\lambda_{ij},\tau_{ij},\gamma$ & all equal-shape partitions & proposed\\
Distance only & initial distance & same set & primary single-factor\\
Envelope only & $\lambda_{ij}$ & same set & primary single-factor\\
Uniform & none & exact mean over same set & primary random\\
Current contact & current in-range envelope & same set & diagnostic\\
Hindsight oracle & realized regret & same set & diagnostic only\\
RCSD (greedy) & RCSD affinity & all capacity-valid & scalable solver\\
RCSD (MnM-sum) & RCSD affinity & all capacity-valid & solver baseline\\
Visibility components & initial-distance components & capacity-valid if available & secondary\\
\bottomrule
\end{tabular}
\end{table*}

For the informative family, the visibility-components selector is feasible in every stratum.
At $L=2$ it has shape $2+2+1$;
at $L=3$ it remains $2+2+1$, creating 4,500 auxiliary rows outside the primary
$3+2$ support.  It is infeasible on all omitted-feature-control strata because the
initially coincident agents form a component larger than capacity, so it is omitted
for those strata.

\section{Evaluation Protocol and Complete Results}
\label{app:evaluation-results}

Numerical-accuracy and reported evaluations use fixed disjoint seed sets;
the exact integers are provided with the supplementary material.  At
$L=2$, shape $2+2+1$ has 15 partitions and four directed internal links; at $L=3$,
shape $3+2$ has ten partitions and eight links.  The resulting 18,000 strata
contain 225,000 complete equal-shape rows plus 4,500 visibility-components
auxiliary rows.

\begin{table*}[t]
\caption{Complete evaluation summary.  Utility effects average $L=2,3$ within
each of 4,500 instances; rank statistics use 9,000 instance/capacity strata.}
\label{tab:evaluation-summary}
\centering
\small
\renewcommand{\arraystretch}{0.98}
\begin{tabular}{lll}
\toprule
Evaluation & Metric & Result\\
\midrule
Informative utility & Uniform ARR / win / $d_z$ / Holm $p$ & .5604/.9709/1.7911/.000030\\
Informative utility & Distance ARR / win; Envelope ARR / win & .2880/.7353; .2531/.7323\\
Certificate ranking & median Spearman / positive strata & .7143/.9757\\
Certificate ranking & clustered 95\% interval & [.7091,.7212]\\
Greedy objective quality & median/p90/max normalized gap & .01110/.05287/.13845\\
100-agent construction & MnM-sum median/p95 at $L=4$ & .2333/.2469 s\\
Omitted-feature control & absolute / signed median Spearman & .2929/.1643\\
Omitted-feature control & ARR / win / paired 95\% interval & .03096/.5164/[.00155,.00281]\\
\bottomrule
\end{tabular}
\end{table*}

\subsection{Paired statistical procedure}

For each evaluation seed $s$, let $r_s$ be RCSD-Exact normalized regret averaged over
$L=2,3$, and let $q_s$ be the corresponding baseline average.  Set
$d_s=q_s-r_s$.  The analysis computes
\begin{enumerate}
  \item $\mathrm{ARR}=1-\sum_sr_s/\sum_sq_s$;
  \item paired standardized effect $d_z=\overline d/\operatorname{sd}(d)$;
  \item win rate $\operatorname{mean}(\Ind\{d>10^{-10}\}+
  \tfrac12\Ind\{|d|\le10^{-10}\})$;
  \item 10,000 paired seed bootstrap resamples;
  \item 100,000 paired sign-flips, followed by Holm adjustment
  of Uniform, Distance, and Envelope $p$-values as one family.
\end{enumerate}
For certificate ranking, bootstrap draws resample instance seeds and carry both capacities as one
cluster.  Thus the 9,000 rank correlations per family are not treated as independent
for the interval.

\subsection{Full utility statistics}

\begin{table}[ht]
\caption{Informative-family exact utility comparison ($n=4{,}500$ paired seeds).}
\label{tab:full-utility}
\centering
\small
\begin{tabular}{lrrrr}
\toprule
Comparator & ARR & Mean diff. & $d_z$ & Win\\
\midrule
Uniform & .56042 & .04985 & 1.7911 & .9709\\
Distance & .28796 & .01581 & .6039 & .7353\\
Envelope & .25313 & .01325 & .5541 & .7323\\
Current & .31315 & .01783 & .7083 & .7777\\
\bottomrule
\end{tabular}
\end{table}

The 95\% paired-reduction intervals and Holm-adjusted one-sided sign-flip tests are
\begin{align*}
\mathrm{CI}_{U}&=[0.04905,0.05068], &p_U&=2.99997\!\times\!10^{-5},\\
\mathrm{CI}_{D}&=[0.01504,0.01656], &p_D&=2.99997\!\times\!10^{-5},\\
\mathrm{CI}_{E}&=[0.01255,0.01396], &p_E&=2.99997\!\times\!10^{-5}.
\end{align*}
All 4,500 informative-family seeds have nonzero aggregate Uniform comparator
regret.  Of 107,833 partition rows with positive regret, 97.3\% have
$\mathrm{regret}/(2B)\ge .01$.  This indicates that the certificate is numerically
non-negligible on most tested rows; it is not evidence that the bound is generally
tight.

\subsection{Exact solver quality}

Table~\ref{tab:solver-quality-main} reports every exact-oracle cell.  The zero
RCSD-MnM-sum gaps at $L=2$ reflect singleton maximum-weight matching on the tested
positive even-$n$ graphs and do not extend to larger capacities.  At $L=4$,
RCSD-MnM-sum uses one third of the permitted directed communication, explaining
its larger objective gaps.  Greedy and dense positive-weight mechanisms fill
capacity; Current may underfill because only strictly positive edges are merged.

\subsection{Scaling details}

The scaling generator draws positions uniformly in $[0,10]^2$, speeds in
$[0.5,1.5]$, symmetric ranges in $[0.5,2]$, aggregate envelopes log-uniformly in
$[0.25,4]$, and uses $\gamma=0.9$.  There are 300 seeds in every cell.

\begin{table}[t]
\caption{Partition-only scaling.  Entries are median / empirical 95th percentile
seconds; Comm. is Current's median [range] fraction of the $n(L-1)$ ceiling.}
\label{tab:scaling-full}
\centering
\scriptsize
\begin{tabular}{rrlll}
\toprule
$n$ & $L$ & Greedy & MnM-sum & Comm.\\
\midrule
24 & 2 & .00065 / .00074 & .00374 / .00442 & .583 [.250,.833]\\
24 & 4 & .00094 / .00103 & .00377 / .00440 & .444 [.139,.722]\\
48 & 2 & .00455 / .00474 & .02695 / .02946 & .750 [.542,.875]\\
48 & 4 & .00647 / .00673 & .02699 / .02951 & .667 [.500,.889]\\
100 & 2 & .03822 / .04009 & .23408 / .24885 & .860 [.740,.920]\\
100 & 4 & .05453 / .05694 & .23326 / .24689 & .833 [.760,.913]\\
\bottomrule
\end{tabular}
\end{table}

These empirical timings characterize the implementation rather than asymptotic
complexity.  No MDP planning occurs in this arm.

\section{Omitted-Feature Control Analysis}
\label{app:control}

Because every initial distance and delay is zero, the RCSD-Exact, envelope-only, and
current-contact selectors are algebraically identical; all three select the same
partition in all 9,000 control strata.  The large-sample ARR is .03096 with
$d_z=.1000$ and win rate .5164: precisely estimated, but much smaller than the
informative-family effects.  The signed median rank association is .1643 and the
median absolute magnitude is .2929.  These residual associations show that this
control does not establish equivalence or exact specificity.

On an exploratory 60-stratum subset, recomputing every selection
from envelopes alone, without reading ports, reproduces 60/60 selections.  No
exact objective tie occurs at tolerance $10^{-12}$; the smallest non-tied top-two
margin is 0.001201.

Two post hoc analyses on that subset are reported as exploratory.  A naive
row-permutation null places the observed median absolute rank correlation above
its 95\% interval $[.1543,.2732]$ (upper-tail $p=.00090$), but ignores that partitions share
edges.  A mechanism-aware null reassigns the ten observed envelopes to edges within
each seed, uses the same reassignment at both capacities, and preserves the
partition/regret incidence structure.  Its 95\% interval is $[.1821,.3375]$.
It contains the observation (upper-tail $p=.0774$).  Direct recomputation confirms that the
selector uses only envelopes, and the margin analysis rules out an exact-tie explanation.
These analyses do not turn the control into an equivalence test or identify the
omitted feature as the sole causal difference between generators.

\section{Random Two-Dimensional End-to-End Evaluation}
\label{app:twod}

\subsection{Task and estimands}

Both tiers use connected grids with static obstacles, fixed subteams, independent
agent motion, and pre-transition rewards.  Agents may co-locate and cross, so
collisions do not couple the transition kernel.  A failed move leaves an agent in
place; otherwise it follows its selected local move.  Unary rewards combine
service-site preferences and movement costs.  Pair rewards are positive only
inside a symmetric finite interaction range and a seeded pair-specific rendezvous
region.  Cross-team pair rewards are removed from the planning surrogate but
remain present when the returned policy is evaluated in the original task.

The small tier uses a $3\times3$ grid with one obstacle, four agents, five local
actions, $\gamma=.85$, speeds in $\{1,2\}$, independent stay-slip probabilities in
$[.08,.25]$, and pair ranges in $\{0,1\}$.  Capacity $L=2$ gives exactly three
$2+2$ partitions.  For each of 128 maps, value iteration solves the unrestricted
centralized stationary-Markov MDP, every two-agent team MDP, and all three product
team policies.  Its reported regret is therefore against the unrestricted
stationary-Markov optimum, not against a hand-designed controller set.

The larger tier uses a connected $7\times7$ grid with 10--22\% obstacles, four
service sites, $n\in\{8,12,16,20\}$, and $L\in\{2,4\}$.  Speeds, slip
probabilities, local rewards, ranges in $\{0,1,2\}$, and both directed reward
envelopes are heterogeneous.  Four stationary shortest-path waypoint controllers
are generated before partition selection and shared by every method.  Exact
finite-prefix marginal occupancy propagation evaluates their unary and pairwise
returns; a target-absorbing tail approximation is continued until each utility
term has error below $10^{-9}$.  Mixed-integer optimization then finds the best
controller assignment centrally and within each selected team.  Incumbent--dual
gaps and accumulated tail errors give the finite-precision audit
\begin{equation}
\operatorname{Reg}_{\mathrm{lib}}(\calP)
\le 2B_{\calP}+\epsilon_{\mathrm{num}}.
\end{equation}
This tier measures regret only within the fixed waypoint-controller library.

\subsection{Comparators and statistics}

RCSD-MIP, Distance-MIP, Envelope-MIP, and Current-MIP optimize the same complete
block shape using, respectively, $q_{ij}$, $1/(1+D_{ij})$,
$\lambda_{ij}/(1-\gamma)$, and the current-contact envelope.  Uniform samples the
same shape.  Value-MIP scores coalitions with their optimal cut-library value and
is an information-richer planning-aware comparator.  Every method therefore uses
the same $n(L-1)$ directed persistent links; online message traffic is not
measured.

Generated maps are the statistical units: 128 paired seeds in the small tier and
30 maps at each $n$ in the large tier, where $L=2,4$ is averaged within map and
the paired bootstrap is stratified by $n$.  Confirmatory large-tier differences
use raw controller-library regret.  For cross-task display, normalized regret is
$g/Z$, where
$Z=[\sum_{i<j}\lambda_{ij}+\sum_i\max_k u_{ik}]/(1-\gamma)$ and $u_{ik}$ is
agent $i$'s service reward at site $k$.  Figure~\ref{fig:twod}a shows these
per-map normalized differences.  All intervals are per-comparator 95\% intervals,
not simultaneous family-wise statements.

\begin{table*}[t]
\caption{Random 2-D selector utility.  Normalized regret is scaled within the
generated task.  Small ARR is the aggregate raw-regret reduction achieved by RCSD
relative to each comparator.
Large $\Delta$ is raw comparator regret minus RCSD regret after averaging
capacities; positive values favor RCSD, and intervals use an $n$-stratified map
bootstrap.  Value-MIP is available only in the fixed-controller tier.}
\label{tab:twod-utility}
\centering
\small
\begin{tabular}{lrrrr}
\toprule
Method & Small norm. regret & Small ARR & Large norm. regret & Large $\Delta$ [95\% CI]\\
\midrule
RCSD & .0826 & -- & .0963 & --\\
Uniform & .1070 & .216 & .1005 & $-1.89$ [$-8.83$, $4.78$]\\
Distance & .1110 & .241 & .1164 & $10.05$ [$5.36$, $14.97$]\\
Envelope & .0850 & .026 & .0876 & $-5.80$ [$-11.23$, $-.57$]\\
Current & .0978 & .148 & .1082 & $4.90$ [$.91$, $8.81$]\\
Value-MIP & -- & -- & .0879 & $-5.44$ [$-8.66$, $-2.30$]\\
\bottomrule
\end{tabular}
\end{table*}

\begin{figure*}[t]
  \centering
  \includegraphics[width=.315\textwidth]{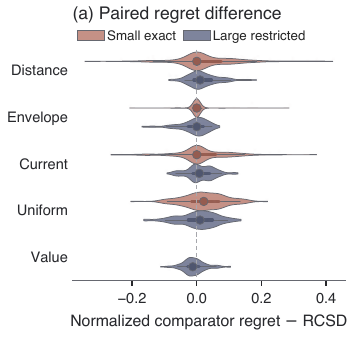}\hfill
  \includegraphics[width=.292\textwidth]{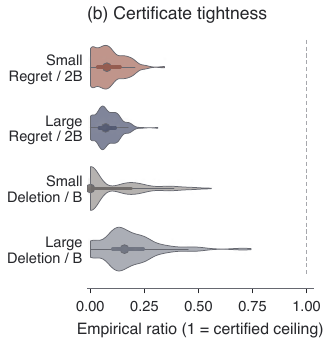}\hfill
  \includegraphics[width=.355\textwidth]{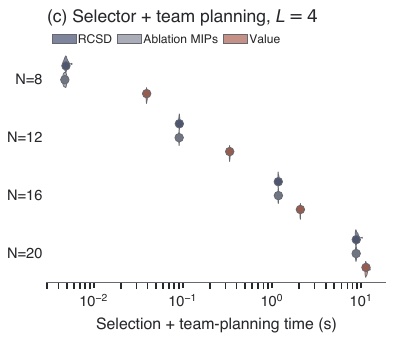}
  \caption{\textbf{Random two-dimensional end-to-end evaluation.}
  (a) Seed-paired normalized regret differences $g_Q-g_{\mathrm{RCSD}}$;
  positive values favor RCSD.  Small exact retains 128 four-agent MDPs; large
  restricted retains 120 map units and averages $L=2,4$ within each map.
  Value-MIP uses planning information unavailable to certificate-input selectors.
  (b) RCSD fixed-policy deletion error divided by $B_{\calP}$ and execution
  regret divided by $2B_{\calP}$; large-tier regret is controller-library
  restricted, and the dashed line is the certified ceiling.  (c) At $L=4$,
  precomputed-score aggregation, fixed-shape partitioning, and selected-team
  planning time; controller-utility and pair-feature construction are excluded.
  Ablation MIPs is the within-map median of Distance, Envelope, and Current.
  Violins retain all plotted observations; inference in (a) uses
  maps as the unit.  Points and thick/thin intervals denote medians,
  interquartile ranges, and 5th--95th percentiles.}
  \Description{Three distribution panels compare paired normalized regret
  differences, empirical certificate ratios against their theoretical ceiling,
  and selection-plus-team-planning runtime as the number of agents grows.}
  \label{fig:twod}
\end{figure*}

\subsection{Certificate tightness and utility boundary}

\begin{table}[t]
\caption{Empirical RCSD certificate ratios (median/p90/p95).  Quantiles use point
values for RCSD-selected partitions; violation counts use the numerically enlarged
upper interval and cover all evaluated partitions or method rows.}
\label{tab:twod-tightness}
\centering
\scriptsize
\begin{tabular}{@{}lccr@{}}
\toprule
Tier & $|\Delta V|/B$ & Regret/$2B$ & Viol.\\
\midrule
Small exact & .0001/.323/.387 & .076/.184/.206 & 0/384\\
Large library & .158/.341/.452 & .070/.149/.175 & 0/1,440\\
\bottomrule
\end{tabular}
\end{table}

All 384 small-tier partition evaluations have positive $B_{\calP}$, and 366 have
positive unrestricted regret.  RCSD has lower aggregate regret than Uniform,
Distance, and Current, but paired raw evidence is borderline for Current; its
difference from Envelope is unresolved.  The Current raw interval has a .0026
lower endpoint under the original 10,000-draw bootstrap and becomes
$[-.00035,.83245]$ in a one-million-draw Monte Carlo sensitivity check.  In the
large tier, pooled pointwise
intervals favor RCSD over Distance and Current, include zero for Uniform, and
favor Envelope and Value-MIP over RCSD.  The normalized paired analysis has the
same interval conclusions.  One plausible task-specific explanation is that
the fixed waypoint library makes long-run service location and reward magnitude
more predictive of realized value than the initial speed-limit delay.  This does
not affect the conditional certificate: it shows that minimizing a worst-case
upper bound need not minimize typical regret.

The ratios in Table~\ref{tab:twod-tightness} are well below one, so the certificate
is conservative rather than calibrated to realized loss.  Its upper tail is not
vacuous: the largest audited ratios over all large-tier methods are .816 for
deletion and .356 for restricted regret.  Every map is connected, every reward and
speed audit passes, all mixed-integer gaps are zero at the solver tolerance, and
the maximum propagated regret-interval width is $7.74\times10^{-7}$.  At $n=20$,
the complete six-method, two-capacity suite has median/95th-percentile runtime
50.16/52.34 seconds; this is an implementation observation, not a deployment
latency guarantee.

\section{Reproducibility Details}
\label{app:reproduction}

Reproduction supports Python 3.12 or later; reported outputs use Python 3.13.1,
NumPy 1.26.4, SciPy 1.15.2, NetworkX 3.6.1, and seaborn
\cite{harris2020numpy,virtanen2020scipy,hagberg2008networkx,waskom2021seaborn}.
The supplementary material contains implementation, fixed configurations, and
seed-level outputs; no learned model, external service, or network call is needed.
Checks cover contact and value bounds, exact five-agent evaluation,
approximate-versus-exact minimization, construction scaling, both 2-D tiers,
complete seed sets and capacities, Bellman and cut-policy residuals, regret
nonnegativity to $10^{-8}$, finite-tail and solver intervals, and
Eq.~\eqref{eq:regret-bound}.

%% file: references.bib
@inproceedings{deweese2024limdp,
  author    = {Deweese, Alex and Qu, Guannan},
  title     = {Locally Interdependent Multi-Agent {MDP}: Theoretical Framework for Decentralized Agents with Dynamic Dependencies},
  booktitle = {Proceedings of the 41st International Conference on Machine Learning},
  series    = {Proceedings of Machine Learning Research},
  volume    = {235},
  pages     = {10687--10709},
  year      = {2024},
  publisher = {PMLR},
  url       = {https://proceedings.mlr.press/v235/deweese24a.html}
}

@inproceedings{qu2020localized,
  author    = {Qu, Guannan and Wierman, Adam and Li, Na},
  title     = {Scalable Reinforcement Learning of Localized Policies for Multi-Agent Networked Systems},
  booktitle = {Proceedings of the 2nd Conference on Learning for Dynamics and Control},
  series    = {Proceedings of Machine Learning Research},
  volume    = {120},
  pages     = {256--266},
  year      = {2020},
  publisher = {PMLR},
  url       = {https://proceedings.mlr.press/v120/qu20a.html}
}

@article{becker2004transition,
  author  = {Becker, Raphen and Zilberstein, Shlomo and Lesser, Victor R. and Goldman, Claudia V.},
  title   = {Solving Transition Independent Decentralized {M}arkov Decision Processes},
  journal = {Journal of Artificial Intelligence Research},
  volume  = {22},
  pages   = {423--455},
  year    = {2004},
  doi     = {10.1613/jair.1497},
  url     = {https://auld.aaai.org/Library/JAIR/Vol22/jair22-013.php}
}

@inproceedings{scharpff2016transition,
  author    = {Scharpff, Joris and Roijers, Diederik M. and Oliehoek, Frans A. and Spaan, Matthijs T. J. and de Weerdt, Mathijs M.},
  title     = {Solving Transition-Independent Multi-Agent {MDP}s with Sparse Interactions},
  booktitle = {Proceedings of the Thirtieth AAAI Conference on Artificial Intelligence},
  year      = {2016},
  pages     = {3174--3180},
  doi       = {10.1609/aaai.v30i1.10405},
  url       = {https://ojs.aaai.org/index.php/AAAI/article/view/10405}
}

@inproceedings{guestrin2001factored,
  author    = {Guestrin, Carlos and Koller, Daphne and Parr, Ronald},
  title     = {Multiagent Planning with Factored {MDP}s},
  booktitle = {Advances in Neural Information Processing Systems 14},
  pages     = {1523--1530},
  year      = {2001},
  publisher = {MIT Press},
  url       = {https://proceedings.neurips.cc/paper_files/paper/2001/hash/7af6266cc52234b5aa339b16695f7fc4-Abstract.html}
}

@article{fiscko2025clustered,
  author  = {Fiscko, Carmel and Kar, Soummya and Sinopoli, Bruno},
  title   = {Clustered Control of Transition-Independent {MDP}s},
  journal = {IEEE Transactions on Control of Network Systems},
  volume  = {12},
  number  = {3},
  pages   = {1881--1893},
  year    = {2025},
  doi     = {10.1109/TCNS.2023.3330925},
  url     = {https://doi.org/10.1109/TCNS.2023.3330925}
}

@inproceedings{fiscko2023reachability,
  author    = {Fiscko, Carmel and Kar, Soummya and Sinopoli, Bruno},
  title     = {Maximizing Reachability in Factored {MDP}s via Near-Optimal Clustering with Applications to Control of Multi-Agent Systems},
  booktitle = {2023 62nd IEEE Conference on Decision and Control ({CDC})},
  pages     = {7970--7975},
  year      = {2023},
  publisher = {IEEE},
  doi       = {10.1109/CDC49753.2023.10383361},
  url       = {https://doi.org/10.1109/CDC49753.2023.10383361}
}

@article{kok2006payoff,
  author  = {Kok, Jelle R. and Vlassis, Nikos},
  title   = {Collaborative Multiagent Reinforcement Learning by Payoff Propagation},
  journal = {Journal of Machine Learning Research},
  volume  = {7},
  number  = {65},
  pages   = {1789--1828},
  year    = {2006},
  url     = {https://www.jmlr.org/papers/v7/kok06a.html}
}

@article{bernstein2002complexity,
  author  = {Bernstein, Daniel S. and Givan, Robert and Immerman, Neil and Zilberstein, Shlomo},
  title   = {The Complexity of Decentralized Control of {M}arkov Decision Processes},
  journal = {Mathematics of Operations Research},
  volume  = {27},
  number  = {4},
  pages   = {819--840},
  year    = {2002},
  doi     = {10.1287/moor.27.4.819.297}
}

@article{kearns2002simulation,
  author  = {Kearns, Michael J. and Singh, Satinder P.},
  title   = {Near-Optimal Reinforcement Learning in Polynomial Time},
  journal = {Machine Learning},
  volume  = {49},
  number  = {2--3},
  pages   = {209--232},
  year    = {2002},
  doi     = {10.1023/A:1017984413808}
}

@book{puterman1994mdp,
  author    = {Puterman, Martin L.},
  title     = {Markov Decision Processes: Discrete Stochastic Dynamic Programming},
  publisher = {Wiley},
  year      = {1994},
  doi       = {10.1002/9780470316887}
}

@book{oliehoek2016decpos,
  author    = {Oliehoek, Frans A. and Amato, Christopher},
  title     = {A Concise Introduction to Decentralized {POMDP}s},
  series    = {SpringerBriefs in Intelligent Systems},
  publisher = {Springer Cham},
  year      = {2016},
  doi       = {10.1007/978-3-319-28929-8},
  url       = {https://link.springer.com/book/10.1007/978-3-319-28929-8}
}

@inproceedings{foerster2016dial,
  author    = {Foerster, Jakob N. and Assael, Ioannis Alexandros and de Freitas, Nando and Whiteson, Shimon},
  title     = {Learning to Communicate with Deep Multi-Agent Reinforcement Learning},
  booktitle = {Advances in Neural Information Processing Systems 29},
  pages     = {2137--2145},
  year      = {2016},
  url       = {https://proceedings.neurips.cc/paper_files/paper/2016/hash/c7635bfd99248a2cdef8249ef7bfbef4-Abstract.html}
}

@inproceedings{sukhbaatar2016commnet,
  author    = {Sukhbaatar, Sainbayar and Szlam, Arthur and Fergus, Rob},
  title     = {Learning Multiagent Communication with Backpropagation},
  booktitle = {Advances in Neural Information Processing Systems 29},
  pages     = {2244--2252},
  year      = {2016},
  url       = {https://proceedings.neurips.cc/paper_files/paper/2016/hash/55b1927fdafef39c48e5b73b5d61ea60-Abstract.html}
}

@inproceedings{lowe2017maddpg,
  author    = {Lowe, Ryan and Wu, Yi and Tamar, Aviv and Harb, Jean and Abbeel, Pieter and Mordatch, Igor},
  title     = {Multi-Agent Actor-Critic for Mixed Cooperative-Competitive Environments},
  booktitle = {Advances in Neural Information Processing Systems 30},
  pages     = {6379--6390},
  year      = {2017},
  url       = {https://proceedings.neurips.cc/paper_files/paper/2017/hash/68a9750337a418a86fe06c1991a1d64c-Abstract.html}
}

@inproceedings{rashid2018qmix,
  author    = {Rashid, Tabish and Samvelyan, Mikayel and Schroeder, Christian and Farquhar, Gregory and Foerster, Jakob and Whiteson, Shimon},
  title     = {{QMIX}: Monotonic Value Function Factorisation for Deep Multi-Agent Reinforcement Learning},
  booktitle = {Proceedings of the 35th International Conference on Machine Learning},
  series    = {Proceedings of Machine Learning Research},
  volume    = {80},
  pages     = {4295--4304},
  year      = {2018},
  publisher = {PMLR},
  url       = {https://proceedings.mlr.press/v80/rashid18a.html}
}

@inproceedings{sunehag2018vdn,
  author    = {Sunehag, Peter and Lever, Guy and Gruslys, Audrunas and Czarnecki, Wojciech Marian and Zambaldi, Vinicius and Jaderberg, Max and Lanctot, Marc and Sonnerat, Nicolas and Leibo, Joel Z. and Tuyls, Karl and Graepel, Thore},
  title     = {Value-Decomposition Networks for Cooperative Multi-Agent Learning Based on Team Reward},
  booktitle = {Proceedings of the 17th International Conference on Autonomous Agents and Multiagent Systems},
  pages     = {2085--2087},
  year      = {2018},
  publisher = {IFAAMAS},
  url       = {https://www.ifaamas.org/Proceedings/aamas2018/pdfs/p2085.pdf}
}

@inproceedings{phan2021vast,
  author    = {Phan, Thomy and Ritz, Fabian and Belzner, Lenz and Altmann, Philipp and Gabor, Thomas and Linnhoff-Popien, Claudia},
  title     = {{VAST}: Value Function Factorization with Variable Agent Sub-Teams},
  booktitle = {Advances in Neural Information Processing Systems 34},
  pages     = {24018--24032},
  year      = {2021},
  publisher = {Curran Associates, Inc.},
  url       = {https://proceedings.neurips.cc/paper/2021/hash/c97e7a5153badb6576d8939469f58336-Abstract.html}
}

@inproceedings{shao2022sog,
  author    = {Shao, Jianzhun and Lou, Zhiqiang and Zhang, Hongchang and Jiang, Yuhang and He, Shuncheng and Ji, Xiangyang},
  title     = {Self-Organized Group for Cooperative Multi-agent Reinforcement Learning},
  booktitle = {Advances in Neural Information Processing Systems 35},
  pages     = {5711--5723},
  year      = {2022},
  publisher = {Curran Associates, Inc.},
  url       = {https://proceedings.neurips.cc/paper_files/paper/2022/hash/25b040c97a75021e57100648a20b1e10-Abstract-Conference.html}
}

@inproceedings{huang2022qscan,
  author    = {Huang, Wenhan and Li, Kai and Shao, Kun and Zhou, Tianze and Taylor, Matthew E. and Luo, Jun and Wang, Dongge and Mao, Hangyu and Hao, Jianye and Wang, Jun and Deng, Xiaotie},
  title     = {Multiagent {Q}-Learning with Sub-Team Coordination},
  booktitle = {Advances in Neural Information Processing Systems 35},
  pages     = {29427--29439},
  year      = {2022},
  publisher = {Curran Associates, Inc.},
  url       = {https://proceedings.neurips.cc/paper_files/paper/2022/hash/bd31bfd4caa85bffe07a35568182cdfa-Abstract-Conference.html}
}

@inproceedings{zang2023gomarl,
  author    = {Zang, Yifan and He, Jinmin and Li, Kai and Fu, Haobo and Fu, Qiang and Xing, Junliang and Cheng, Jian},
  title     = {Automatic Grouping for Efficient Cooperative Multi-Agent Reinforcement Learning},
  booktitle = {Advances in Neural Information Processing Systems 36},
  pages     = {46105--46121},
  year      = {2023},
  publisher = {Curran Associates, Inc.},
  url       = {https://proceedings.neurips.cc/paper_files/paper/2023/hash/906c860f1b7515a8ffec02dcdac74048-Abstract-Conference.html}
}

@inproceedings{liu2025hygma,
  author    = {Liu, Chiqiang and Li, Dazi},
  title     = {{HYGMA}: Hypergraph Coordination Networks with Dynamic Grouping for Multi-Agent Reinforcement Learning},
  booktitle = {Proceedings of the 42nd International Conference on Machine Learning},
  series    = {Proceedings of Machine Learning Research},
  volume    = {267},
  pages     = {38767--38788},
  year      = {2025},
  publisher = {PMLR},
  url       = {https://proceedings.mlr.press/v267/liu25af.html}
}

@inproceedings{deng2025staf,
  author    = {Deng, Zihao and Gao, Peng and Jose, Williard Joshua and Wigness, Maggie and Rogers, III, John G. and Reily, Brian and Reardon, Christopher M. and Zhang, Hao},
  title     = {Subteaming and Adaptive Formation Control for Coordinated Multi-Robot Navigation},
  booktitle = {Proceedings of The 9th Conference on Robot Learning},
  series    = {Proceedings of Machine Learning Research},
  volume    = {305},
  pages     = {2665--2677},
  year      = {2025},
  publisher = {PMLR},
  url       = {https://proceedings.mlr.press/v305/deng25b.html}
}

@inproceedings{chen2026cpo,
  author    = {Chen, Dingyang and Ye, Jianing and Zhang, Zhenyu and Kuang, Xiaolong and Shen, Xinyang and Ozer, Ozalp and Zhang, Chongjie and Zhang, Qi},
  title     = {Correlated Policy Optimization in Multi-Agent Subteams},
  booktitle = {The Fourteenth International Conference on Learning Representations},
  year      = {2026},
  url       = {https://openreview.net/forum?id=Tke3BVwUz6}
}

@inproceedings{levinger2024bounded,
  author    = {Levinger, Chaya and Hazon, Noam and Simola, Sofia and Azaria, Amos},
  title     = {Coalition Formation with Bounded Coalition Size},
  booktitle = {Proceedings of the 23rd International Conference on Autonomous Agents and Multiagent Systems},
  pages     = {1119--1127},
  year      = {2024},
  publisher = {IFAAMAS},
  url       = {https://www.ifaamas.org/Proceedings/aamas2024/pdfs/p1119.pdf}
}

@article{rahwan2009anytime,
  author  = {Rahwan, Talal and Ramchurn, Sarvapali D. and Jennings, Nicholas R. and Giovannucci, Andrea},
  title   = {An Anytime Algorithm for Optimal Coalition Structure Generation},
  journal = {Journal of Artificial Intelligence Research},
  volume  = {34},
  pages   = {521--567},
  year    = {2009},
  doi     = {10.1613/jair.2690}
}

@article{holm1979multiple,
  author  = {Holm, Sture},
  title   = {A Simple Sequentially Rejective Multiple Test Procedure},
  journal = {Scandinavian Journal of Statistics},
  volume  = {6},
  number  = {2},
  pages   = {65--70},
  year    = {1979},
  doi     = {10.2307/4615733},
  url     = {https://www.jstor.org/stable/4615733}
}

@book{efron1993bootstrap,
  author    = {Efron, Bradley and Tibshirani, Robert J.},
  title     = {An Introduction to the Bootstrap},
  publisher = {Chapman and Hall/CRC},
  year      = {1993},
  doi       = {10.1201/9780429246593}
}

@article{spearman1904association,
  author  = {Spearman, Charles},
  title   = {The Proof and Measurement of Association between Two Things},
  journal = {The American Journal of Psychology},
  volume  = {15},
  number  = {1},
  pages   = {72--101},
  year    = {1904},
  doi     = {10.2307/1412159}
}

@article{lakens2013effects,
  author  = {Lakens, Dani{\"e}l},
  title   = {Calculating and Reporting Effect Sizes to Facilitate Cumulative Science: A Practical Primer for $t$-Tests and {ANOVA}s},
  journal = {Frontiers in Psychology},
  volume  = {4},
  pages   = {863},
  year    = {2013},
  doi     = {10.3389/fpsyg.2013.00863},
  url     = {https://www.frontiersin.org/journals/psychology/articles/10.3389/fpsyg.2013.00863/full}
}

@article{harris2020numpy,
  author  = {Harris, Charles R. and Millman, K. Jarrod and van der Walt, St{\'e}fan J. and Gommers, Ralf and Virtanen, Pauli and Cournapeau, David and Wieser, Eric and Taylor, Julian and Berg, Sebastian and Smith, Nathaniel J. and Kern, Robert and Picus, Matti and Hoyer, Stephan and van Kerkwijk, Marten H. and Brett, Matthew and Haldane, Allan and Fern{\'a}ndez del R{\'i}o, Jaime and Wiebe, Mark and Peterson, Pearu and G{\'e}rard-Marchant, Pierre and Sheppard, Kevin and Reddy, Tyler and Weckesser, Warren and Abbasi, Hameer and Gohlke, Christoph and Oliphant, Travis E.},
  title   = {Array Programming with {NumPy}},
  journal = {Nature},
  volume  = {585},
  pages   = {357--362},
  year    = {2020},
  doi     = {10.1038/s41586-020-2649-2}
}

@article{virtanen2020scipy,
  author  = {Virtanen, Pauli and Gommers, Ralf and Oliphant, Travis E. and Haberland, Matt and Reddy, Tyler and Cournapeau, David and Burovski, Evgeni and Peterson, Pearu and Weckesser, Warren and Bright, Jonathan and van der Walt, St{\'e}fan J. and Brett, Matthew and Wilson, Joshua and Millman, K. Jarrod and Mayorov, Nikolay and Nelson, Andrew R. J. and Jones, Eric and Kern, Robert and Larson, Eric and Carey, C. J. and Polat, {\.I}lhan and Feng, Yu and Moore, Eric W. and VanderPlas, Jake and Laxalde, Denis and Perktold, Josef and Cimrman, Robert and Henriksen, Ian and Quintero, E. A. and Harris, Charles R. and Archibald, Anne M. and Ribeiro, Ant{\^o}nio H. and Pedregosa, Fabian and van Mulbregt, Paul and {{SciPy} 1.0 Contributors}},
  title   = {{SciPy} 1.0: Fundamental Algorithms for Scientific Computing in {Python}},
  journal = {Nature Methods},
  volume  = {17},
  pages   = {261--272},
  year    = {2020},
  doi     = {10.1038/s41592-019-0686-2}
}

@inproceedings{hagberg2008networkx,
  author    = {Hagberg, Aric A. and Schult, Daniel A. and Swart, Pieter J.},
  title     = {Exploring Network Structure, Dynamics, and Function Using {NetworkX}},
  booktitle = {Proceedings of the 7th Python in Science Conference},
  pages     = {11--15},
  year      = {2008},
  doi       = {10.25080/TCWV9851}
}

@article{waskom2021seaborn,
  author  = {Waskom, Michael L.},
  title   = {seaborn: Statistical Data Visualization},
  journal = {Journal of Open Source Software},
  volume  = {6},
  number  = {60},
  pages   = {3021},
  year    = {2021},
  doi     = {10.21105/joss.03021}
}

@inproceedings{das2019tarmac,
  author    = {Das, Abhishek and Gervet, Th{\'e}ophile and Romoff, Joshua and Batra, Dhruv and Parikh, Devi and Rabbat, Mike and Pineau, Joelle},
  title     = {{TarMAC}: Targeted Multi-Agent Communication},
  booktitle = {Proceedings of the 36th International Conference on Machine Learning},
  series    = {Proceedings of Machine Learning Research},
  volume    = {97},
  pages     = {1538--1546},
  publisher = {PMLR},
  year      = {2019},
  url       = {https://proceedings.mlr.press/v97/das19a.html}
}
